\setlist{leftmargin=5.5mm}
\pgfplotsset{compat=newest}
\newcommand{\dante}{{\sc Airmed}\xspace}
\newcommand{\omnet}{{OMNeT++}\xspace}
\newcommand{\opt}{${\mathcal{O}}$\xspace}
\newcommand{\opti}{{\mathcal{O}}}
\newcommand{\adv}{${\mathcal{A}}$\xspace}
\newcommand{\blank}{{blank}}
\newcommand{\whp}{{$w.h.p$}}
\newcommand{\nodes}{{\bf N}}
\newcommand{\sram}{${\sf SecRAM}$}
\newcommand{\rom}{${\sf ROM}$}
\newcommand{\code}{${\sf code}$}
\newcommand{\data}{${\sf data}$}
\newcommand{\rf}{${\sf NIC}$}
\newcommand{\msg}{${\sf MSG}$}
\newcommand{\req}{{\sf req}}
\newcommand{\resp}{{\sf resp}}
\newcommand{\ttl}{{\sf ttl}}
\newcommand{\ver}{${\sf ver}$}
\newcommand{\cert}{${\sf cert}$}
\newcommand{\posi}{{\sf pos}}
\newcommand{\dati}{{\sf data}}
\newcommand{\adhoc}{{Mesh}}
\newcommand{\selfcheck}{${\sf selfcheck}$}
\newcommand{\attest}{${\sf attest}$}
\newcommand{\sign}{${\sf sign}$}
\newcommand{\rectify}{${\sf rectify}$}
\newcommand{\extl}{{\sf ext}}
\newcommand{\intl}{{\sf int}}
\newcommand{\vinay}[1]{\textcolor{blue}{{\bf Vinay:} #1}}
\def\thickhline{%
  \noalign{\ifnum0=`}\fi\hrule \@height \thickarrayrulewidth \futurelet
   \reserved@a\@xthickhline}
\def\@xthickhline{\ifx\reserved@a\thickhline
               \vskip\doublerulesep
               \vskip-\thickarrayrulewidth
             \fi
      \ifnum0=`{\fi}}
\newlength{\thickarrayrulewidth}
\pgfplotsset{
/pgfplots/lineLegend/.style={
legend image code/.code={
\draw[black](-0.05cm,0cm) -- (0.3cm,0cm);%
   }
  }
}
\begin{document}

\author{Sourav Das}
\authornote{Part of the work was done when the author was at IIT Bombay}
\affiliation{University of Illinois at Urbana-Champaign}
\email{souravd2@illinois.edu}

\author{Samuel Wedaj}
\affiliation{Department of Computer Science and Engineering., Indian Institute of Technology Delhi}
\email{samuel.wed@cse.iitd.ac.in}

\author{Kolin Paul}
\affiliation{Department of Computer Science and Engineering., Indian Institute of Technology Delhi}
\email{kolin@cse.iitd.ac.in}

\author{Umesh Bellur}
\affiliation{Department of Computer Science and Engineering, Indian Institute of Technology Bombay}
\email{umesh@cse.iitb.ac.in}

\author{Vinay Joseph Ribeiro}
\affiliation{Department of Computer Science and Engineering, Indian Institute of Technology Bombay}
\email{vinayr@iitb.ac.in}

\title{\dante: Efficient Self-Healing Network of Low-End Devices}

\begin{abstract}
The proliferation of application specific cyber-physical systems coupled with the emergence of a variety of attacks on such systems (malware such as Mirai and Hajime) underlines  the need to secure such networks. Most existing security efforts have focused on only detection of the presence of malware. However given the ability of most attacks to spread through the network once they infect a few devices, it is important to contain the spread of a virus and at the same time systematically cleanse the impacted nodes using the communication capabilities of the network. Toward this end, we present \dante\ - a method and system to not just detect corruption of the application software on a IoT node, but to self correct itself using its neighbors. \dante's decentralized mechanisms prevent the spread of self-propagating malware and can also be used as a technique for updating application code on such IoT devices. 
Among the novelties of \dante\ are  a novel bloom-filter technique along with hardware support to identify position of the malware program
from the benign application code,
an adaptive self-check for computational efficiency, and a uniform random-backoff and stream signatures for secure and bandwidth
efficient code exchange to correct corrupted devices.  We assess the performance of \dante,  using the embedded systems security architecture of TrustLite in the \omnet\ simulator. The results show that \dante
scales up to thousands of devices, ensures guaranteed update of the entire network, and can recover 95\% of the nodes in ~10 minutes
in both internal and external propagation models. Moreover, we evaluate memory and communication costs and show that  \dante is efficient and incurs very low overhead.

\end{abstract}

\keywords{Device Correction; Internet-of-Things; Low End Devices; Malware Containment.}

\maketitle

\section{Introduction}
\label{sec:intro}
%

Application-specific low-end devices have become ubiquitous 
in safety-critical systems such as hazard control, airplanes,
nuclear reactors, etc. A recent Gartner report estimates that there will 
be more than 20 billion Internet-of-Things~(IoT) devices by 
the end of the year 2020~\cite{Gartner}.\footnote{We use the 
terms ``device'' and ``node'' interchangeably in the paper.}
As the use of such devices becomes imperative in mission critical systems, their security is of immense 
concern~\cite{kumar2019all}. 
Attacks on a nuclear power plant using Stuxnet~\cite{trautman2017industrial}, 
large scale Distributed-Denial-of-Service~(DDoS) attacks using 
IoT Botnets such as Mirai and Hajime~\cite{antonakakis2017understanding,kolias2017ddos}, potential disruption of power-grids using 
high wattage devices~~\cite{soltan2018blackiot}, and malware which can rapidly spread citywide 
using deployed Phillips hue bulbs, illustrate 
the importance of ensuring that such networks are secure and/or can recover quickly and cheaply from attacks~\cite{ronen2017iot}.

Existing security research on low-end devices only focuses on detecting the presence of malware using Remote Attestation (RA)~\cite{asokan2015seda,seshadri2004swatt,ambrosin2016sana}
and Machine Learning (ML)~\cite{azmoodeh2018robust,
haddadpajouh2018deep}. RA allows a trusted verifier to detect a compromised device or network of connected devices.
Similarly, the core idea behind the ML-based 
detection of an attack is to train an ML model with
historical network traces and use inferences to detect network 
intrusion in real-time. 

Although these approaches are useful, their scope is limited. This
is because, corrupt devices can not only malfunction, they can even 
spread the malware to other nodes.
Furthermore, as we illustrate~(ref.~\S\ref{sec:code update}) an
intelligent adversary can fatally prevent a significant 
fraction of honest nodes from correcting themselves via updates. 
Existing approaches fail to restrict a self-propagating
malware from compromising the entire network~\cite{ronen2017iot,
antonakakis2017understanding,kolias2017ddos,bertino2017botnets,
fernandes2016security}. Also, they do not tackle how 
to recover a compromised device in the presence of a root 
privileged adversary. 
A naive scheme of deploying a vulnerability patch over the 
network to 
update the corrupt device would not work because an adversary with 
access to the incoming 
network messages can simply drop the update messages. 
Finally, such efforts also make strong assumptions 
such as the trusted party can communicate with the network 
at all times; corrupted devices voluntarily communicate with 
a trusted party, and so on that rarely hold in real 
cyber-physical networks. 


\vspace{0.5mm}
\noindent
Motivated by the above, we answer the following questions:
\begin{itemize}[noitemsep,topsep=1pt]
    \item How to securely and efficiently detect the presence 
    of a self-propagating malware (including zero-day
    attacks) in a network of heterogeneous low-end devices?
    \item Once malware is detected, how to prevent it from spreading 
    to the entire network and how to securely heal the corrupt devices in 
    a decentralized manner without the intervention from an external 
    trusted party while ensuring minimal overhead?
\end{itemize}

%
As our solution, we present \dante\footnote{Goddess of healing (in Irish mythology), 
known for her
prowess in healing those who fell in battle~\cite{Airmed}.}, the first decentralized 
mechanism to recover a heterogeneous network of low-resource 
cyber-physical systems (CPS) in the presence of self-propagating 
malware. In addition to device correction, \dante\
further assists in the critical issue of over-the-air code updates. 
Specifically, it ensures that all devices in the network get updated.
We would 
like to emphasize that, to ensure that our solution remain 
applicable in a more general sense, we deliberately avoid 
implementation specific details of IoT devices and study 
the problem in abstract sense. 

At its core, every device in \dante\ performs a periodic 
self-check of the application that the device is running. 
\dante\ assumes (readily available) minimal hardware support for 
the self-check~\cite{eldefrawy2012smart,koeberl2014trustlite,
trustile2018impl}. 
During the self-check, if a device detects 
that it has been corrupted, it disables the execution of its 
application code. Then the device seeks assistance from its 
neighbors  to recover itself with the correct/updated code.
We refer such a device as {\em \blank} device.
Although, execution of applications are disabled in a \blank\ 
device, we ensure~(ref.~\S\ref{sub:hardware}) that it can still communicate with its peers and run the recovery protocol. 

\vspace{0.5mm}
\noindent
{\bf Challenges.}
The resource-constrained nature of low-end devices raises a series 
of challenges in designing secure and efficient correcting protocols. 
First challenge is to reduce the trade-off between 
bandwidth usage and delay in the correction. Specifically, if a 
device has $N$ neighbors, the procedure of asking each neighbor 
to transmit the correct application program has a high 
bandwidth cost. Alternatively, asking the neighbors transmit
the correct application in a round-robin manner can lead to 
long delay in correction. Furthermore, these approach also 
introduces security vulnerability, as a malicious neighbor can 
send an incorrect application code to exhaust the bandwidth 
resources of an honest device.

Second challenge arises due to the fact that each self-check 
is expensive and involves interrupting normal execution flow. 
Hence, we want self-checks to be rare, but a rare self-check 
will allow the malware to stay undetected for longer duration 
leading to a faster spread. 

A third challenge is to efficiently identify the modified 
portion of the application code to avoid downloading the entire 
application. This can significantly reduce the bandwidth overhead
in devices with large application code. Naive approaches, such
as storing a hash of chunks of the application code in a secure
memory, increases the size of secure memory. Alternate approach 
of participating in an interactive protocol as in~\cite{ibrahim2019healed}, requires $O(\log z)$ rounds of 
communication in the worst case to identify a single modified 
code chunk from a total of $z$ chunk. 

Additional challenges include efficient authentication of messages 
to prevent replay attacks, identifying appropriate and realistic 
network constraints to ensure that a blank device can communicate 
with its neighbors. We address all of these challenges in 
this paper.

\noindent In summary we make the following contributions:
\begin{itemize}[noitemsep,topsep=1pt]
    \item We present \dante, the first decentralized, secure and 
    a resource-efficient mechanism that ensures recovery of devices 
    in the presence of a self-propagating malware in a heterogeneous network 
    without the intervention of a trusted entity. We also demonstrate 
    that \dante\ mitigates several critical limitations of prevalent 
    secure device update schemes.

    \item We perform a rigorous theoretical analysis of various 
    mechanisms used in \dante and illustrate their efficiency
    over naive schemes. We prove that \dante\ guarantees the 
    recovery and update of all the devices and under specific
    assumptions. 

    \item We present a thorough empirical analysis of \dante\ with 
    multiple topologies using \omnet\ simulator. Our evaluation 
    illustrates that \dante\ can scale to 1000s of devices, heal 
    95\% network in 10 minutes, and guarantee update of the entire network, while ensuring low overheads. 
\end{itemize}

\noindent
{\bf Organization.}
We present the System Overview, Threat Model and Required
Connectivity in~\S\ref{sec:system}. Details about device initialization and network setup are given in~\S\ref{sec:setup}.
\S\ref{sec:design} presents our detailed design of device
correction followed by details of code update 
in~\S\ref{sec:code update}. We then, theoretically analyze 
our design choices in~\S\ref{sec:analysis}. Simulation and 
Evaluation details are given in~\S\ref{sec:simulation} 
and~\S\ref{sec:evaluation}. A few related works are described 
in~\S\ref{sec:related work}. We finally conclude with a 
discussion in~\S\ref{sec:conclusion}.

\section{System Model}
\label{sec:system}
%
We consider a connected network of $N$ low-end devices, where 
$i^{\rm th}$ device, $n_i$, runs a set of applications $B_i$. Further, 
we allow devices to store binaries $C_i$ (${\sf .bin}$ files) of other applications 
and transfer these binaries on request from a device connected to 
it. Let $A_i=B_i\cup C_i$. 
All devices and the associated binaries are initialized and deployed
by a trusted third party \opt. 
As devices often have heterogeneous resources such as memory,
bandwidth, and power, based on the resources available at these
devices, we classify them into
{\em Low Resource}~(LR) and {\em High Resource}~(HR) devices.
A LR device only communicates with the devices 
it is directly connected to; such as devices in its wireless
transmission range. Furthermore, LR device only responds to 
a code request of a neighbor if and only if it already possesses 
the requested data. In contrast, HR devices can employ fault-tolerant 
routing algorithms such as Ariadne~\cite{hu2005ariadne}, 
SAR~\cite{yi2001security} to communicate with HR devices through
a sequence of other HR devices. Also, HR device forwards all 
kinds of messages as long as it can validate the signature that 
the message carries. Note that, \dante\ will also work in 
network with only LR or HR devices as long as the connectivity
requirement specified in~\S\ref{sub:connectivity} are satisfied.

\subsection{Threat Model}
\label{sub:adversary}
A device is called {\em corrupt} if any of its binaries from $A_i$ 
is modified by an adversary \adv.
Similar to existing works, we consider software-only-attacks~\cite{asokan2015seda,eldefrawy2012smart,
ambrosin2016sana}. Hence, at the application layer, a corrupt device 
can arbitrarily deviate from the specified \dante\ protocol. Also, 
\adv\ can drop arbitrary network packets that arrive or leave a 
corrupt device. 
%
%
Next, based on the malware propagation model, we classify \adv\ into
two categories: {\em Internal} and {\em External}. 

With an internal adversary, we assume \adv\ has, at time 0, 
corrupted $f$ fraction of devices. Each corrupt device, say 
$n_i$, spreads the malware
as follows. 
First, $n_i$ chooses 
one of its neighbors at random, waits from a time-interval
drawn from an 
exponential distribution with parameter $\lambda_\intl$
and corrupts the chosen neighbor. 
If the neighbor is already corrupt, its state remain 
unchanged. All corrupt devices independently repeat this process for the 
entire duration they remain corrupt. 
Intuitively this model captures the setting where a corrupt device 
repeatedly tries to corrupt a randomly chosen neighboring device and in 
each trial it successfully corrupts the device with tiny probability. 
Such a model approximately captures the true
propagation  of a malware~\cite{wang2003epidemic,zou2002code}.

Alternatively, an external adversary corrupts a device by directly
connecting to it, and not through one of its neighbors. Specifically, \adv\ first chooses a random device 
from the network and waits for a time drawn from an exponential
distribution with parameter $\lambda_\extl$ and corrupts the chosen
device. \adv\ repeats this till it is forcefully disconnected from
the network. Such an adversary captures proximity attacks where 
the attacker enters the wireless range of the victim and corrupts
it~\cite{ronen2017iot}. 

\subsection{Hardware Modules}
\label{sub:hardware}
We next describe the memory organization and communication
requirements of
devices in \dante.
This memory organization is already considered in 
embedded trust anchors such as SMART~\cite{eldefrawy2012smart}, 
TyTAN~\cite{brasser2015tytan}, and TrustLite~\cite{koeberl2014trustlite}. 
TrustLite and TyTAN have been implemented on Intel's Siskiyou Peak 
research architecture~\cite{trustile2018impl}. 

\vspace{0.5mm}
\noindent{\bf Memory Organization.}
The memory of each device is divided into four parts where each part
serves a distinct purpose and has different access control. 
The first part is the Read-Only Memory~(\rom) whose contents are fixed 
during manufacturing and are independent of the application running
on the device. \rom\ stores all procedures required for secure 
execution of the \dante\ protocol. \rom\ is executable, and its 
contents are publicly accessible. One crucial thing to note is that procedures present in \rom\ can be only invoked starting at  designated pre-specified entry points and are executed atomically 
without any interrupts.

The remaining memory regions are non-volatile and are divided into
three parts: \code, \data, and \sram\ region. \code\ region is 
executable, stores all application binaries, i.e., $A_i$ of device 
$n_i$. The  \data\ region is non-executable and is used to store data, and 
run-time environments such as stack and heap for both procedures 
in both \code\ and \rom\ regions. Hence, \adv\ can run modified 
binaries only if they are stored in the \code\ region.

Lastly, \sram~(Secure RAM) is non-executable and is inaccessible to 
procedures present in \code. \sram\ is used to store mission-critical
mutable data that needs protection from \adv. 
We achieve this property using {\em Execution Aware Memory Access 
Control}~(EA-MAC) introduced in SMART architecture 
and later improved by TrustLite~\cite{koeberl2014trustlite}. 
EA-MAC enforces read/write controls depending upon the 
address of the instruction that is currently being executed. 
During secure boot of a device, EA-MAC allows a user to specify tuples 
of memory range say $(c,m)$ with the semantics that memory range $m$ 
can only be accessed by instructions present in memory range $c$. For 
example, TrustLite achieves EA-MAC through its {\em Memory Protection 
Unit}~(MPU). Figure~\ref{fig:memory-all} summarizes the memory
organization of \dante\ along with their access permissions.

\begin{figure}[!h]
    \captionsetup[subfigure]{aboveskip=-4pt,belowskip=-4pt}
    \centering
    \begin{subfigure}{0.14\linewidth}
    \centering
    \includegraphics[width=\textwidth]{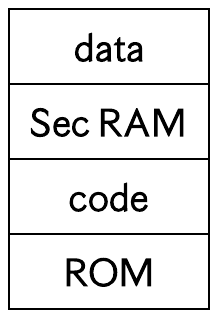} 
    \label{fig:memory-layout}
    \caption{}
    \end{subfigure}    
    \begin{subfigure}{0.70\linewidth}
    \centering
    \includegraphics[width=\textwidth]{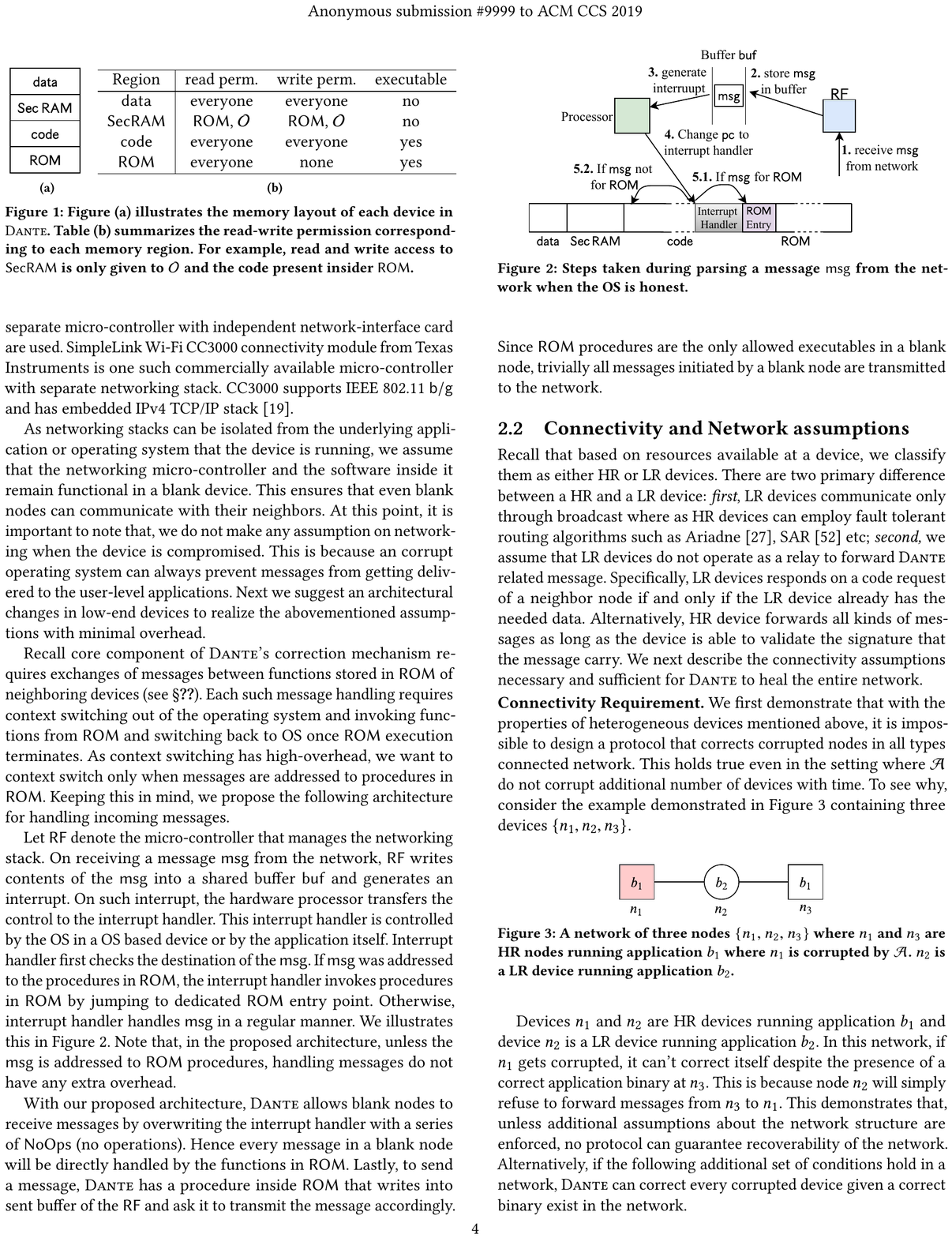} 
    \label{fig:memory-access}
    \caption{}
    \end{subfigure} 
    \caption{Figure~(a) illustrates the memory layout of each
    device. Table~(b) summarizes the read-write permission
    of each memory region. For example, read and write
    access to \sram\ is only given to network operator \opt, and the 
    code present inside \rom.}
    \label{fig:memory-all}
\end{figure}

\noindent{\bf Communication Stack.}
In \dante, we require blank devices to communicate. 
We achieve this using the fact that low-end devices are often equipped 
with separate micro-controller for the networking stack. For example,
SimpleLink Wi-Fi CC3000 connectivity module from Texas Instruments 
is one such commercially available micro-controller with separate 
networking stack. CC3000 supports IEEE 802.11~${\tt b/g}$ 
and has an embedded IPv4 TCP/IP stack~\cite{msp2018dang}. 
As networking stack can be isolated from the underlying application
or operating system that the device is running, we assume that the
networking micro-controller and the software inside it remain 
functional in a \blank\ device and communicates as follows.

\begin{figure}[h!]
    \centering
    \includegraphics[width=0.85\linewidth]{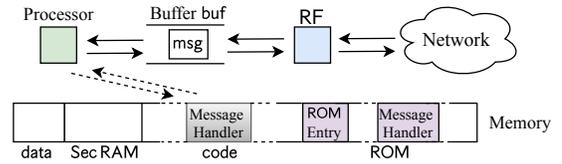}
    \caption{Proposed architecture of a device for enabling a 
    \blank device to communicate with the network. The gray message
    handler belongs to the running application and can be corr. 
    However, the message handler stored inside the
    \rom\ is immutable.}
    \label{fig:arch}
\end{figure}
Let \rf~(Network Interface Card)
denote the micro-controller managing the networking 
stack. When the device is honest, all incoming messages are 
first handled by the Message Handler of the application or the
operating system~(OS). Then it is the responsibility of the 
application's message handler to invoke procedures from \rom\ 
whenever a message is intended for \rom. However, in a blank
device, procedures in \rom\ directly communicate with the 
\rf\ module using the secure message handler stored inside 
\rom. Figure~\ref{fig:arch} illustrates this  
architecture.

\subsection{Connectivity and Network Requirements}
\label{sub:connectivity}
For any given application $b$, the basic requirement of \dante\ is that 
a \blank\ device for $b$ should be able to heal itself as long as there
exist one honest device in the network that has $b$. A necessary condition to achieve
this is that the induced sub-graph formed by devices with code 
of $b$ and devices through which they can communicate is connected. 
To see why, consider the example in Figure~\ref{fig:impossibility}
containing three devices $\{n_1,n_2,n_3\}$.
Devices $n_1$ and $n_3$ are HR devices running application $b_1$ and
device $n_2$ is an LR device running application $b_2$. In this network,
if $n_1$ gets corrupted, it cannot correct itself despite the 
presence of a correct application code at $n_3$. This is because 
device $n_2$ will refuse to forward messages from $n_3$ to $n_1$.
\begin{figure}[ht]
    \centering
    \includegraphics[width=0.50\linewidth]{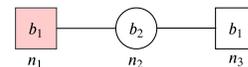}
    \caption{A network of three devices $\{n_1,n_2,n_3\}$ where $n_1$ 
    and $n_3$ are HR devices running application $b_1$ where $n_1$ is 
    corrupted by \adv. $n_2$ is a LR device running application $b_2$.}
    \label{fig:impossibility}
\end{figure}

Generalizing the above, any given network $G=\langle V,E\rangle$ 
must meet the following requirement. Let $V'_{b} \subseteq V$ be 
the set of devices that either runs or stores application $b$. 
Let $H_b$ be the set of HR devices in $G$ that are connected 
to at least one device in $V'_b$ either directly or through 
a sequence of HR devices. Let $V_{b}=V'_{b}\cup H_{b}$ and 
let $G_b\subseteq G$ be the induced subgraph of $G$ with the 
vertex set $V_b$. 
We prove in Theorem~\ref{thm:recover} that \dante\ can correct all 
applications whose $G_b$ forms a connected component and at least 
one honest device that stores program of application $b$ exists in $G_b$ under some specific assumptions. 
In our example in Figure~\ref{fig:impossibility}, $G_{b_1}$ 
consisting of device $n_1$ and $n_3$ is not connected. 
Hence, for \dante\ to be most effective, network
designer must ensure that $G_{b_i}$ for all $i$ are 
connected, which can be achieved by first creating a 
spanning tree among devices running same application
and later add more devices to the spanning tree.

\subsection{Notations}
Let $|M|$ denote the number of elements in a finite set $M$. 
If $m$ is a integer~(or bit string), then $|m|$ means the 
bit-length of $m$. 
Furthermore, let $\{0,1\}^{\ell}$ denote the set of all bit
strings of length $\ell$. 


\vspace{0.5mm}
\noindent{\bf Attestation.} $v \leftarrow \attest(k, d)$ 
is an algorithm that takes an input $k$, a bit string $d$ 
and computes a deterministic digest $v$ of the $d$. 
Also, \attest\ guarantees \whp~\footnote{For any security parameter
$\ell > 0$, an event happening with high probability \whp\ implies
that the event happens with probability $1-o(1/{\sf poly}(\ell))$. 
Here ${\sf poly}(\ell)$ refers to class of all polynomials with 
parameter $\ell$.} 
that for any pair of keys 
$k, k'$ and data $d, d'$, $\attest(k,d)=\attest(k',d')$
iff $k = k'$ and $d= d'$, where $d$ corresponds to the contents of the \code\ region.

\vspace{0.5mm}
\noindent{\bf Signature.} 
A signature scheme is a tuple of probabilistic polynomial time 
algorithms $({\sf keygen, sign, ver})$. 
$(pk,sk) \gets{\sf keygen(1^{\ell})}$ where $\ell \in \mathbb{N}$. $sk$ and $pk$ are the signing and verification key respectively,
$\sigma \gets {\sf sign}(sk, d)$ is the algorithm to sign string 
$d$ using key $sk$. Lastly, ${\sf ver}(\sigma,d,pk)\in\{0,1\}$ 
is the verification algorithm. 

Unless otherwise stated, throughout the paper we use $\cert(d)$ 
to denote ${\sf sign}(sk_\opti, d)$ and verification of $\cert(d)$
implies verification of the signature using $pk_\opti$, public key 
of the operator \opt.

\section{Network Setup}
\label{sec:setup}

\subsection{Device Initialization}
\label{sub:device init}
The \rom\ of each device stores the functions involved in malware
detection and device correction, and is initialized at the time of 
manufacturing. This can be easily extended 
to the setting where contents of \rom\ can be modified using a 
hardware switch present in the device. Hence, \opt\ instantiates 
the remaining regions of memory. 
Executable files of all applications in $A_i$ are stored in the 
\code\ region. \sram\ of each device is initialized with 
$pk_\opti$, a freshly generated asymmetric key 
pair $(pk_i, sk_i)$ unique to $n_i$ along with the certificate
\cert$(pk_i)$. For each application in $A_i$, \opt\ stores 
their version numbers \ver$(A_i)$, \cert$($\ver$(A_i))$, and
\cert$(A_i)$ in \sram. 
The reason behind storing these certificates is to allow the device 
to prove correctness of its code to other devices in the 
network. Lastly, \opt\ also initialize each device with the
self-check rate $\lambda$, the maximum allowable self-check rate
$\lambda_\max$,  and the  minimum allowable self-check rate 
$\lambda_\min$. A detailed description of these self-check 
rates are given in~\textsection\ref{sec:design}.

Once initialized, each device locally generates a cryptographic
symmetric key $k_i$, attest key $ak_i$, and a sequence key $q_i$ 
as uniformly distributed random numbers in $\{0,1\}^\ell$. 
The attest key is used to compute attestation over the contents 
of \code, and $q_i$ is used to prevent replay attacks. Let $v_i$ 
be the output of the attestation procedure. $n_i$ next generates
a set of keys $L$, of size $\kappa = |L|$, which is used to 
initialize a bloom filter $F$ of size $\mu Z/t$. 
Here $Z$ is the size of the \code\ region whose contents are 
divided into chunks of size $t$ bits each. Refer~\cite{kirsch2006less} 
for more details of bloom filters. 

\subsection{Device Rendezvous} 
\label{sub:rendezvous}
Every device in the network periodically announces itself to 
other devices in its transmission range by broadcasting a 
hello message. On hearing a new device, say $n_j$ with public
key $pk_j$, device $n_i$ rendezvous with it to validate 
each other's certificate $\cert(pk_i)$ and $\cert(pk_j)$. 
On successful validation, they securely exchange their
keys $(q_i,k_i)$ and $(q_j,k_j)$. 
As a device rendezvous with other devices only once, the
key exchange mechanism can be realized using the key-exchange
scheme of TLS 1.3~\cite{bhargavan2017verified}. 
Let $N_i$ be the set of all devices in $n_i$'s transmission range
with whom $n_i$ has rendezvous with, hereon we refer to the devices 
in $N_i$ as the {\em neighbors} of $n_i$. Hence at the end of 
rendezvous, $n_i$ will have a set of $\{k_j,q_j\}_{\forall j\in N_i}$. 
Note that, in our scheme, each device shares the same key with 
all its  neighbors. We do this primarily for efficiency. 
This can be easily extended to establish a unique 
symmetric key between each pair of devices. Table~\ref{tab:content}
summarizes the memory contents of each device at the end of 
initialization and rendezvous.
\begin{table}[h!]
\small
\centering
    \caption{Memory content of a device after initialization and 
    device rendezvous phase.}
    \begin{tabularx}{\linewidth}{c|c|c}
      \hline
      Region & Manufacturing/Initialization & Rendezvous\\ 
      \hline
      \multirow{3}{*}{\sram} & $pk_{\mathcal O}$, $(pk_i,sk_i)$, 
      \ver$(A_i)$, \cert$(A_i)$,
      & \multirow{3}{*}{$\{k_j,q_j\}$} \\
      & {\sf cert}$(pk_i)$, \cert$($\ver$(A_i))$, $\lambda_i$, $\lambda_{\min}$, $\lambda_{\max}$ & \\
      \cline{2}
      & $k_i$, $q_i$, $(ak_i, v_i)$, $F$ & \\
      \hline
      \code & $A_i$ & \\
      \hline
      \multirow{1}{*}{\rom} &\selfcheck(), \attest(), \sign(), \rectify(), \ldots \\
      \hline
  \end{tabularx}
  \label{tab:content}
\end{table}



\section{Design of \dante}
\label{sec:design}
At a very high-level, correction of a corrupt device in \dante\ 
involves the following steps. 
Each device periodically initiates a self-check procedure to
detect whether it is corrupt or not. In case the device is 
found to be corrupt, its hardware disables execution from its 
\code\ region. Then the device queries its neighbors for a 
correct application code. We next look at each of these procedures 
in detail.
\subsection{Detecting Malware}
\label{sub:detect} 
Every device performs periodic self-check with the time interval
between two consecutive self-checks chosen from an exponential 
distribution with parameter $\lambda$. As expected value
of exponential distribution with parameter $\lambda$ is 
$\frac{1}{\lambda}$~\cite{mitzenmacher2017probability}, the expected time between two consecutive self-check is
$\frac{1}{\lambda}$. We pick time intervals 
from an exponential distribution due to their memoryless 
property~\cite{mitzenmacher2017probability}. As the 
rate of propagation of malware depends crucially on the 
time a device remains infected, memoryless self-checks will
prevent \adv from strategically infecting devices to 
increase the duration for which the device remain corrupt. 
Furthermore, memoryless self-checks prevents a mobile adversary 
from evading detection by uncorrupting a infected nodes
just before the next self-check~\cite{ma2009new}.
Let $\delta\gets\exp\{\lambda\}$ be one such realization of the 
time interval. Starting from last self-check, $\delta$
is decremented by one in every clock cycle. When $\delta$ reaches
zero, the processor generates a hardware interrupt. On this
interrupt, the processor pauses the running application, 
records the run-time state of the application in a non-volatile 
memory and invokes \selfcheck$()$ procedure from \rom. Also, all
interrupts are disabled to allow atomic execution of \selfcheck.

Procedure \selfcheck$()$ first invokes procedure \attest$()$ with its
input as $ak_i$ and entire contents of the \code\ region. 
Let $v'_i$ be the attestation result. If $v'_i$ equals to $v_i$,
i.e., the contents of \code\ are not tampered, \dante\ increments 
the expected wait time between self-checks, that is
$\frac{1}{\lambda}=\frac{1}{\lambda}+1$ as long as it does not 
exceed a pre-defined upper bound. In other words, it sets $\lambda$ 
to $\max\{\lambda_\min, \lambda/(\lambda+1)\}$.
Next, interrupts are enabled and the control is given back to the 
application. 
On the contrary, $v'_i\ne v_i$ implies modification of the 
application code. 
In such a situation, instead of resuming the application, 
\selfcheck\ sets a hardware bit to make the \code\ region 
non-executable, and invokes \rectify$()$, another secure 
procedure from \rom. The pseudocode of \selfcheck$()$ is given 
in  Algorithm~\ref{algo:selfcheck} where we use $[\code]$ 
to refer to the contents of the \code\ region.

\vspace{0.5mm}
\noindent {\bf Malware localization.}
Once the tampering has been detected, the next goal is to identify
the tampered region of the code to avoid downloading the entire 
application program.  
A naive approach of dividing the entire $[\code]$ into chunks of
size $t$ and storing hash of each chunk in \sram\ has high memory
usage. Specifically, if $Z=|[\code]|$, than 
this 
approach would require storing $\ell Z/t$ bits of additional
storage in \sram, where $\ell$ is the size of the output
of the hash function. 

In \dante\ we reduce this storage overhead through novel use 
of bloom-filters. Recall from~(\S\ref{sub:device init}), that a 
$\mu Z/t$ bit long (for small constant $\mu$) bloom filter $F$ 
is initialized with partitions of $[\code]$ using the set
of secret keys $L$. 
Hence, to localize the malware, \rectify$()$ finds all chunks 
that are absent in $F$. The idea is, since \adv\ is unaware of 
keys in $L$, the chunks modified by the adversary will most 
likely be absent in the $F$ and hence will be detected by 
\rectify$()$. For example, in Figure~\ref{fig:bloom-filter}, 
adversary modifies $i^{\rm th}$ chunk to $c'_i$, which is 
absent in the filter $F$. As a result, instead of the entire 
application code  the blank device will query its neighbors 
only for the chunks that are marked as absent in the bloom 
filter.
\begin{figure}[h!]
    \centering
    \includegraphics[width=0.80\linewidth]{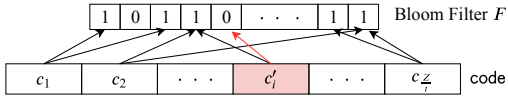}
    \caption{Adversary modifies $i^{\rm th}$ chunk to $c'_i$ which 
    is absent from the filter $F$.}
    \label{fig:bloom-filter}
\end{figure}

%


However, since bloom filters have non-negligible false positive 
rates, it is possible (albeit rarely) that \rectify$()$ fails to 
identify all the modified chunks. In such situation, the blank 
device downloads the entire application program. Also, we we 
keep $t$ and number of keys in $L$ parameterizable that 
can be picked for any desired false positive rate. For example, 
with $Z=16384$, i.e., 16KB of executable memory, which is typically 
the case in MSP430 micro-controllers~\cite{msp2018dang}, 
$\mu=8$, $|L|=4$, and $Z/t=32$, we show in \S\ref{sub:eval performance}
that a blank device will download the entire application program
less than $2\%$ of the time. 
Refer to~\cite{mitzenmacher2017probability} for detailed 
analysis of false-positives in bloom filters.

Once the corrupted chunks has been identified, interrupts 
are re-enabled. We call a device with disabled execution 
as a {\em \blank} device. Recall~(ref.~\S\ref{sub:hardware}), 
in all \blank\ devices all incoming messages are directly 
handled by functions in \rom~(ref.~\textsection\ref{sub:hardware}).

\subsection{Correcting \blank\ Devices}
\label{sub:correction}
The basic idea of correction is that once a device $n_i$ becomes
\blank, it asks one of its neighbors to send the correct
version of the compromised code along with the  
certificate from \opt.  $n_i$ on receiving these chunks validate
their correctness by checking the certificate from \opt. On 
successful validation, it installs them in its \code\ region and 
starts normal execution of the application program. Further, in
scenarios where devices in $N_i$ are running different versions of the
code, it is desirable to download the most recent version of the
application among all available versions. Here we are implicitly 
assuming that recent versions of application programs have a 
higher version number.

A na\"ive approach is to send a message to each neighbor and
request for the necessary chunks of code. On receiving the 
application programs from each neighbors, $n_i$ locally identify 
the highest version, validates it and then installs it. 
This approach is bandwidth inefficient as it 
requires each neighbor to transmit all codes, which might 
be relatively large in a resource constrained setting. 

An alternate approach is to first ask neighbors for the 
version number of application $b$ they are running, 
and then request the neighbor running the 
highest version to send the code. Although this 
approach is bandwidth-efficient, it has several limitations. 
First, this approach does not protect $n_i$ from requesting code from 
a malicious neighbor that might deny or delay the response to the 
code request by merely dropping or delaying the code request message.
Further, in the case of dense network the cost of transmitting 
so many version messages could still be overwhelming.
Also, none of these approaches prevent a corrupt device from 
sending spurious version and code request to honest devices 
and drain their bandwidth and computation resources. 

\vspace{0.5mm}
\noindent{\bf Our Approach.}
Let $\Pi \subseteq [Z]$ denote the set of corrupt chunk indices
at the \blank\ device $n_i$. For simplicity, let us assume that all
of the corrupted chunks belong to a single application $b$ with its
version being $z_i=\ver(b)$. Also, let us assume that $\Pi$ includes 
all modified indices, i.e., there is no false positive due to the 
bloom filter.
Let $N^{(b)}_i \subseteq N_i$ denote the set of devices among neighbors
which are in $G_{b}$, i.e., the induced subgraph of $G$ for application 
$b$~(ref.~\textsection\ref{sub:connectivity}). 
We assume that $n_i$ is unaware of the identities of devices in $N^{(b)}_i$.

To request correct code, $n_i$ broadcast to its neighbors a message 
$\msg_{\req}$ with $\langle \req, \ttl, q_i,|N_i|,z_i,b,\Pi \rangle$ as 
its payload. Unless otherwise stated, we assume that all messages are 
tagged with a message {\em Message Authentication Code}, source of 
messages can be established for every message transmitted in the 
wireless range, and sequence number $q_i$ is incremented by $n_i$
after every message. Tag $\req$ in message payload specifies that 
this message is to request for binaries. Sequence number $q_i$ 
assists devices in $N_i$ to establish validity and freshness of
$\msg_{\req}$. 

\vspace{0.5mm}
\noindent{\bf Adaptive self-check rate.}
Each honest device $n_j \in N_i$ on receiving $\msg_\req$ first updates 
its self-check rate as:
\begin{equation}
    \lambda \gets \mathbb{1}_{\ttl>0} \min \{ 2\lambda, \lambda_\max \} + \mathbb{1}_{\ttl \le 0} \lambda
    \label{eq:rate update}
\end{equation}
where $\mathbb{1}_x$ is a indicator function which is equal to 
value 1 if $x$ is true and 0 otherwise. $\ttl$ in the message 
payload is the parameter to limit broadcast of device corruption
message. 
Additionally, when $\ttl>0$, device $n_j$ broadcasts a warning 
message to all its neighbors, i.e., devices in $N_j$ informing 
about corruption of $n_i$ with parameter $\ttl-1$. Similar to 
devices in $N_i$, devices in 
$N_j\setminus N_i$ updates their self-check rate according to 
equation~\ref{eq:rate update} and recursively forwards it to their 
neighbors as long as $\ttl$ reaches zero. 
Figure~\ref{fig:adaptive-self-check}~(a) and~(b) illustrates the
self-check rate of neighbors of $n_3$, before and after $n_3$
broadcasts $\msg_\req$ with $\ttl=1$.
\begin{figure}[h!]
    \centering
    \includegraphics[width=0.95\linewidth]{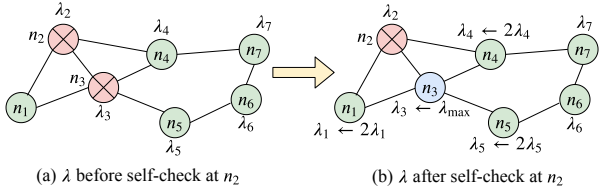}
    \caption{Self-check rate of neighbors of a device $n_3$
    before~(a) and after~(b) $n_3$ broadcasts $\msg_\req$ with
    $\ttl=1$. As a result, honest neighbors $n_1,n_2$ and $n_5$ 
    updates their $\lambda$ using equation~\ref{eq:rate update}.
    Also, $n_3$ sets its own self-check rate to $\lambda_\max$.}
    \label{fig:adaptive-self-check}
\end{figure}

\vspace{0.5mm}
\noindent{\bf Code transmission with random-backoff}.
To address the issue of redundant code transmission, each 
neighbor $n_j$ of a blank node $n_i$, performs a uniform 
random-backoff with backoff delay $\tau_j$ as:
\begin{equation}
    \tau_j =  \max\{\Delta - (z_j-z_i), 0\}|N_i|\theta + 
    \left\lfloor{\mathcal U}(0,1)|N_i|\right\rfloor\theta 
    \label{eq:back-off}
\end{equation}
%
where $\Delta$ estimate of maximum difference in version numbers 
among devices running a particular application. Similarly, 
$\theta$ is a protocol parameter denoting the approximate 
upper bound on time required to transmit the requested chunks, 
$z_j$ is the version number of $b$ at $n_j$, and 
$\mathcal{U}(0,1)$ is a value chosen uniformly randomly between
$(0,1)$. The intuition behind
this approach is two fold: {\em first}, we prioritize responses 
from devices running a higher version of the same application; 
{\em second} among devices running the same version of the
application, we aim to spread the time when these device 
transmits the requested chunks. Device $n_j$ only starts the 
timer if $z_j \ge z_i$, otherwise $n_j$ simply discards the 
message. Figure~\ref{fig:random-backoff} illustrates the 
the distribution of transmission time at neighbors of $n_1$.
Pseudocode in~\ref{algo:handle-req} describes the steps taken
by each device in $N_i$.
\begin{figure}[h!]
    \centering
    \includegraphics[width=0.80\linewidth]{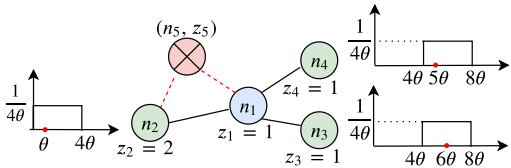}
    \caption{Distribution of time at which neighbors of blank
    node $n_1$, i.e., $N_1=\{n_2,n_3,n_4,n_5\}$ transmits the
    requested chunks of code. Here, $\Delta=1, |N_1|=4$, node $n_5$ 
    is corrupt, version of $n_2$, i.e., $z_2=2$ and all remaining
    node has version $1$. Red dot on the time axis in each graph,
    is one realization of the transmission time.}
    \label{fig:random-backoff}
\end{figure}

Without loss of generality, let $n_j \in N^{(b)}_i$ be the 
honest device with smallest back-off interval $\tau_j$ among 
all honest devices in $N^{(b)}_i$. Once $\tau_j$ expires, 
$n_j$ sends a single chunk to $n_i$ and waits for an 
acknowledgement from $n_i$. On receiving the acknowledgement
message from $n_i$, $n_j$ sends the remaining chunks. If more 
than one honest device simultaneously sends the first chunk, 
$n_i$ sends acknowledgement to only one of them. 
We present a detailed analysis of such scenarios
in~\S\ref{sub:communication cost}.

\noindent{\bf Stream Signatures.}
If we use a signature scheme in which a blank device $n_i$ must
receive all chunks before verifying their signatures, it will
allow an adversary to waste a lot of bandwidth by sending 
invalid chunks and $n_i$ will not know they are invalid till 
the very end. 
We mitigate this attack using on-line variant of stream 
signature introduced in~\cite{gennaro1997sign}. In stream 
signature, the signer \opt, signs first chunk and embeds 
in each chunk $c_i$ the hash of the next chunk $c_{i+1}$. 
Figure~\ref{fig:stream sig} illustrates this. As a
result, a bogus chunk can be detected immediately.
\begin{figure}[ht]
    \centering
    \includegraphics[width=0.80\linewidth]{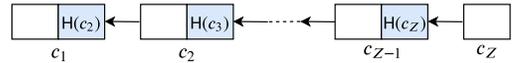}
    \caption{In stream signature messages are divided 
    into chunks and each chunk (except the last chunk) contains
    the hash of the next chunk, i.e., $c_i$ contains ${\sf H}(c_{i+1})$. The signer only signs the $c_1$.}
    \label{fig:stream sig}
\end{figure}

Lastly, once $n_i$ receives all chunks in $\Pi$, it broadcasts a 
\msg$_{\sf done}$ to its neighbors indicating that it has successfully 
corrected itself. Honest neighbors on hearing \msg$_{\sf done}$ cancel
their back-off timers (if any) corresponding to $n_i$'s code request.
Alternatively, if $n_i$ do not receive all the correct chunks 
within time $\Delta|N_i|\theta + |N_i|\theta$, $n_i$ rebroadcasts
\msg$_\req$ with the updated $\Pi$ after a time delay of 
$\delta$ drawn from $\exp\{\lambda\}$. Such as situation 
could possibly arise if either all devices in $N^{(b)}_i$ are
running a lower version of $b$, or they are in \blank\ or 
corrupt state. Algorithm~\ref{algo:handle-resp} presents the
pseudocode for handling a response to $\msg_\req$ message.

\vspace{0.5mm}
\noindent{\bf Fast Correction.}
\dante\ also enables fast correction of a cluster of \blank\ devices.
With solely the method just described above, if the nearest honest device is $r$ hops away, correction of $d_i$ takes
in the best case an expected time of $r/\lambda_\min$. To enable
faster correction, whenever a device $n_j$ is corrected, it 
immediately broadcasts a message containing information about
the corrected code, its version number, and the corresponding 
certificates. On hearing this message, \blank\ device seeking the 
appropriate binaries can actively request it from device $n_j$.
As a result, the corrected binaries spreads through the network much 
faster without waiting for the timers of \blank\ devices to expire.

\section{Update of Application Binaries}
\label{sec:code update}
\begin{figure*}[ht]
	\centering
	\includegraphics[width=0.95\linewidth]{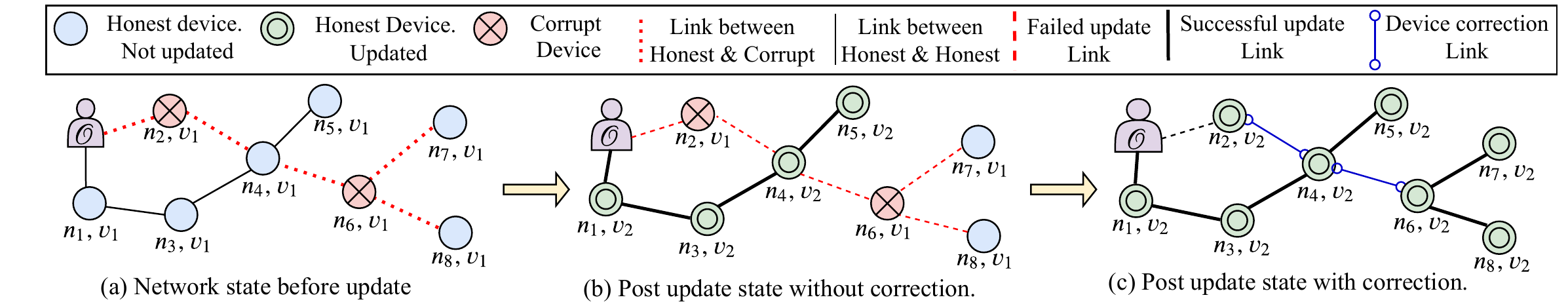}
	\caption{code update in a network of 8 devices $\{n_1,\ldots,n_8\}$
	with device $n_2$ and $n_6$ corrupt prior to update~(a). 
	Initially all device run the same version $v_1$ of the application.
	Let $v_2>v_1$ be the updated version, then without a correction 
	mechanism only $\{n_1,n_3,n_4,n_5\}$ will be updated~(b). However, when
	deployed along with \dante\, the entire network will get updated
	once $n_2$ and $n_6$ perform self-check~(c).}
	\label{fig:update}
\end{figure*}
So far we have only looked at how a compromised device self-corrects
itself with the help of its neighbors. We now consider the behavior of
the whole network that is running the \dante protocol specifically in
situations where \opt updates the application program executed with
newer versions. To expound \dante's applicability for updating 
binaries in a network of low-end devices we consider a prevalent 
update technique motivated from~\cite{seshadri2006scuba,
kohnhauser2016secure,asokan2018assured}, study it in 
our threat model and show its limitations. We then make
minor modifications
to the \dante protocols described so far, and show 
that \dante when combined with this network update technique
overcomes these limitations. For brevity we will only focus on 
the network $G_b$ for a specific application $b$. This can be easily 
extended to the entire network. Also, each newer version comes with
a monotonically increasing version number.

Consider the following recursive swarm update mechanism 
of~\cite{asokan2018assured}. Here \opt\ first must find
one device which is honest and then update it with a newer version of
the application. Finding an honest device is important because a corrupt
device can simply drop the update messages. This originator device 
then updates its neighbors and so on to form a virtual update tree. 

This scheme has several shortcomings. First, in case a large fraction 
of devices are corrupted, \opt may have to contact many devices to 
find one honest one. Hence it allows malware to spread for longer
duration. Second, the above approach can only update devices connected
to the \opt\ through a sequence of honest device and all other 
devices may still remain corrupted. 

Suppose we use \dante along with this recursive update
procedure. Even if a large number of devices are corrupt, they will
become blank and then get corrected over time. Thus \opt has a higher
chance of encountering a device which is either blank or running a
correct application. In fact, if \opt is in contact with $\kappa$ 
devices (say in wireless communication range of them), then from 
elementary probability theory, the expected time for at least one 
of them to become blank is at least $1/((\kappa+1)\lambda_\min))$. 

We now show how the second problem of the update scheme 
in~\cite{asokan2018assured}, namely the 
inability of the update
to reach any corrupted device, is solved. The update propagates on an
honest virtual tree as before. Consider a corrupt device which has a
neighbor in this tree. After it performs a self-check it becomes
blank and then obtains the latest version from its neighbors.  We now
make a minor modification to \dante. This device then acts like a new
root and propagates the latest version to its neighbors who are
honest but do not have the latest version. This increases the size of
the virtual tree running the latest version, until the virtual tree
encompasses the entire $G_b$. This is illustrated in the transitions
of Figure~\ref{fig:update}.

Let the network shown in Figure~\ref{fig:update}(a) be the $G_b$
consisting of devices $\{n_1,\ldots,n_8\}$ for application $b$ with $v_1$
as the current version. Let $n_2$ and $n_4$ be the corrupted devices.
With this initial state of the network, \opt\ will successfully
initiate the update procedure with at-most two trials. 
Let $v_2>v_1$ be the newer version of $b$. Without any correction
mechanism the update will fail to reach honest device $\{n_7,n_8\}$. 
Also the corrupt devices will not be updated as well. 
Figure~\ref{fig:update}(b) illustrates this.

However, in \dante, as soon a corrupt device performs a self-check and 
detects that it has been compromised, it will download the updated 
code from one of its neighbors. Further, it will forward the 
information about the newer update to all the devices in its 
neighborhood that are running an obsolete version of the application.
Stated differently, when $n_6$ corrects itself it then behaves
as a new originator and updates $n_7$ and $n_8$ as shown in  
Figure~\ref{fig:update}(c). 
This is analogous to a temporary pause of the original update
procedure due to adversarial devices in the path and its resumption
later as the devices enter the blank state as a part of the
protocol.

\section{Analysis}
\label{sec:analysis}
%

\subsection{Secure Memory Cost}
\label{sub:memory cost}
Recall~(ref.~\S\ref{sub:device init}), for malware localization,
\dante\ uses a bloom filter of size $\mu Z/t$ and $|L|$ keys for
input to the hash function of the bloom filter. Hence, \dante\ 
stores $\ell|L|$+$\mu Z/t$ bits of information in secure memory.
Where as, naive approach of storing hash of each chunk would 
have required $\ell Z/t$ bits of memory. 
Next, with the help of  Table~\ref{tab:content}, we evaluate the 
size of \sram\ required to store the remaining information for a 
single application. This can be easily extended to multiple
applications. Each device $n_i$ stores two asymmetric public 
key, $pk_{\mathcal O}$ and $pk_i$, one asymmetric private key $sk_i$. 
Let $|pk_{\mathcal O}|=|pk_i|=|sk_i|=1024$. If each certificate is of size 
256 bits, each device stores one certificate for its public key 
and two certificates for each application. 
Also we use same number of bits for all three self-check parameters, 
i.e., $|\lambda|=|\lambda_\min|=|\lambda_\max|=32$. Similarly, let
$\ell=|k_i|=|q_i|=|ak_i|=|v_i|=128$. 
Lastly, for each of its neighbor in $N_i$, a device needs to store
$256 = 2\times 128$ for the shared symmetric key and the sequence
number. Summarizing the above,
\begin{equation}
|{\sf SecRam}| = 3|pk|+(4+|L|)\ell + 3|{\sf cert}|+ \frac{\mu Z}{t} + 3|\lambda| + 2|k||N_i|
\end{equation}


\subsection{Communication Cost.}
\label{sub:communication cost}
The first major source of communication is due to the fact that
a blank device in \dante\ only requests for the modified chunks. 
%
%
However, as bloom filter has non-negligible false-positive rates
and it is possible (albeit rarely) that the bloom filter fail to 
localize the malware. Thus for any given $Z$, $t$, $\mu$ and $L$, 
we compute the expected number of chunks a blank device needs to 
download to correct itself. 

Let $\kappa$ be the number of chunks modified by the adversary.
Since, the bloom filter keys are inaccessible to the attacker, 
from elementary cryptography, the attacker can not make strategic
modifications to evade the bloom filter check~\cite{kirsch2006less}. 
Hence, we assume modifications of these chunks to be arbitrary. 
Let $p$ be the false positive rate for a single chunk; then with 
the above assumptions, $p=(1-e^{-|L|/\mu})^{|L|}$. Refer to~\cite{kirsch2006less} for more details.

\begin{theorem}
Assuming hash functions are ideal, if an adversary corrupts $\kappa$ 
chunks from a total of $Z/t$ chunks in a device which uses bloom 
filter scheme of~\cite{kirsch2006less} with $\mu Z/t$ bit 
filter and $|L|$ hash functions, then the probability that the blank
device download the entire code is:
\begin{equation}
\Pr[\text{download entire code}] = 1-\left(1-p\right)^\kappa 
\label{eq:all code}
\end{equation}
Also, expected number of chunks the blank device will download is:
\begin{equation}
  \frac{Z}{t}\left(1-\left(1-p\right)^\kappa\right) + \kappa\left(1-p\right)^\kappa
  \label{eq:expected chunk} 
\end{equation}
\end{theorem}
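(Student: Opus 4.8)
The plan is to model the behavior of \rectify\ on the $\kappa$ adversarially modified chunks as a collection of independent Bernoulli trials, and then read off both equations by a two-case analysis. Write $n = Z/t$ for the total number of chunks. The first step is to observe that only the $\kappa$ modified chunks can ever be reported \emph{absent} by the filter: a bloom filter has no false negatives, so each of the $n-\kappa$ untouched chunks is genuinely present in $F$ and is always reported present, hence never flagged. Consequently the set of chunks \rectify\ marks for download is always a subset of the $\kappa$ modified ones, and no spurious flags arise from the benign part of \code.

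Next I would pin down the per-chunk failure probability. Because the keys $L$ are secret and the hash functions are ideal, each modified chunk is, from the filter's viewpoint, a value that was never inserted and whose hash images are uniformly random; by the standard analysis in~\cite{kirsch2006less} such an element is reported present (a false positive, i.e.\ \emph{missed} by \rectify) with probability exactly $p=(1-e^{-|L|/\mu})^{|L|}$, and correctly reported absent with probability $1-p$. Modeling the $\kappa$ corresponding queries as independent, let $X_i$ be the indicator that the $i$-th modified chunk is missed; then the $X_i$ are i.i.d.\ Bernoulli$(p)$. Equation~\eqref{eq:all code} then follows at once: the device must download the entire program precisely when \rectify\ fails to catch \emph{every} modification, i.e.\ when at least one $X_i=1$. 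The complementary event ``all $\kappa$ modifications flagged'' is $\bigcap_i \{X_i=0\}$, which by independence has probability $(1-p)^\kappa$, so $\Pr[\text{download entire code}] = 1-(1-p)^\kappa$.

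For Equation~\eqref{eq:expected chunk} I would apply the law of total expectation over these two complementary events. Conditioned on failure (probability $1-(1-p)^\kappa$) the device fetches all $Z/t$ chunks; conditioned on complete success (probability $(1-p)^\kappa$) every modified chunk is detected and, since no benign chunk is ever flagged, exactly $\kappa$ chunks are fetched. Weighting and summing the two contributions gives $\frac{Z}{t}(1-(1-p)^\kappa)+\kappa(1-p)^\kappa$, as claimed.

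The main obstacle—indeed the only delicate point—is the independence assumption used in the second step: the $\kappa$ false-positive events all read from a single shared bit array, so they are strictly independent only under the idealized random-oracle/ideal-hash model, where $p$ simultaneously plays the role of the marginal per-query false-positive rate and of the factor in the product $(1-p)^\kappa$. This is exactly the modeling regime of~\cite{kirsch2006less} invoked in the theorem's hypothesis (``assuming hash functions are ideal''), and it is also what guarantees that the adversary, being ignorant of $L$, cannot bias any $X_i$ away from rate $p$ by a strategic choice of modification. Under that hypothesis the factorization is exact and both equations hold as stated.
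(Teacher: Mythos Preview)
Your proposal is correct and follows essentially the same two-case decomposition as the paper: either all $\kappa$ modified chunks are flagged (download $\kappa$ chunks) or at least one is missed (download all $Z/t$ chunks), with independence of the per-chunk false-positive events yielding $(1-p)^\kappa$ for the former. Your write-up is in fact more careful than the paper's, since you explicitly justify why unmodified chunks are never flagged (no false negatives) and you flag the independence assumption as the delicate point that the ideal-hash hypothesis is doing the work of licensing.
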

\begin{proof}
Whenever, the device all $\kappa$ modified chunks, it only
downloads $\kappa$ chunks. This gives us the second term of
equation~\ref{eq:expected number}. Alternatively, even with
a single false positive among $\kappa$ chunks, the device 
downloads all $Z/t$ chunks. Combining this with 
equation~(\ref{eq:all code}), we get the first term of 
our result.
\end{proof}

The next source of communication improvement is due to the 
random back-off procedure used for reducing the number of 
neighbors that transmit the requested chunks. The following
theorem~(proof in Appendix) illustrates that the expected 
number of neighbors that will transmit the requested chunks. 
\begin{theorem}
If a device has $m$ neighbors, then the expected number
of neighbors that transmits the requested chunks are
\begin{equation}
\sum_{j=1}^{m-1}\sum_{k=1}^m \left(\frac{k\binom{m}{k} (m-j)^{(m-k)}}{m^m}\right)+\frac{m}{m^m} 
\label{eq:expected number}
\end{equation}
\label{thm:expected number}
\end{theorem}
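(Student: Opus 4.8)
The plan is to read the quantity as a balls-into-bins occupancy statistic. Restricting attention to the honest neighbors that carry the highest version of $b$ (so that the version-dependent term $\max\{\Delta-(z_j-z_i),0\}|N_i|\theta$ in~\eqref{eq:back-off} is common to all of them and therefore does not influence their relative order), the only randomness governing who fires first comes from the discrete term $\lfloor\mathcal{U}(0,1)\,m\rfloor\theta$, where $m=|N_i|$. Since $\mathcal{U}(0,1)$ is uniform on $(0,1)$, the factor $\lfloor\mathcal{U}(0,1)\,m\rfloor$ is uniform on $\{0,1,\dots,m-1\}$. Hence each of the $m$ neighbors independently selects one of $m$ equally likely time slots, which is exactly the model of throwing $m$ balls into $m$ bins uniformly at random.

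Next I would pin down the random variable being averaged. By the protocol, every neighbor sitting in the earliest occupied slot sends its opening chunk \emph{before} any acknowledgement is issued, and $n_i$ acknowledges only one of them; neighbors in strictly later slots hear the acknowledgement or the completion broadcast and cancel their timers. Consequently the number of neighbors that actually transmit equals $K$, the occupancy of the \emph{minimum occupied} bin, and the theorem reduces to computing $\mathbb{E}[K]$ for uniform balls-into-bins.

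The computation I would carry out by conditioning. Index the slots $0,1,\dots,m-1$ and let $M$ denote the smallest occupied slot. For the event $\{M=s,\,K=k\}$ to hold, slots $0,\dots,s-1$ must be empty, exactly $k$ of the $m$ balls must land in slot $s$, and the remaining $m-k$ balls must fall among the $m-1-s$ slots above $s$. Selecting those $k$ balls in $\binom{m}{k}$ ways yields
\[
\Pr[M=s,\,K=k]=\binom{m}{k}\Big(\tfrac{1}{m}\Big)^{k}\Big(\tfrac{m-1-s}{m}\Big)^{m-k}
=\frac{\binom{m}{k}\,(m-1-s)^{m-k}}{m^{m}} .
\]
Writing $j=s+1\in\{1,\dots,m\}$ turns $m-1-s$ into $m-j$, so that
\[
\mathbb{E}[K]=\sum_{j=1}^{m}\sum_{k=1}^{m}\frac{k\binom{m}{k}\,(m-j)^{m-k}}{m^{m}} .
\]
It then remains to peel off the boundary slot $j=m$ (equivalently $s=m-1$), where the factor $(m-j)^{m-k}$ forces $k=m$ under the convention $0^{0}=1$ and contributes precisely $m/m^{m}$. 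What survives is the double sum over $j=1,\dots,m-1$ and $k=1,\dots,m$ appearing in~\eqref{eq:expected number}, which completes the argument.

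The step I expect to be the main obstacle is not the algebra but justifying the modeling claim that the number of transmitters is exactly the occupancy of the minimum non-empty slot: one must argue that ties within the earliest slot all count (they fire simultaneously, before the acknowledgement resolves them), while corrupt or lower-version neighbors are correctly excluded from the $m$ balls. The only delicate point in the calculation itself is the boundary slot together with the $0^{0}=1$ convention, which has to be handled consistently so that the lone $j=m$ term is neither double-counted nor dropped.
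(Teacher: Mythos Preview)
Your proposal is correct and follows essentially the same route as the paper. The paper proves a lemma giving $\Pr[X^j=k]=\binom{m}{k}(m-j)^{m-k}/m^m$ by exactly the counting you describe (choose the $k$ neighbors in slot $j$, place the rest in the $m-j$ later slots, divide by $m^m$) and then derives~\eqref{eq:expected number} by summing $k\Pr[X^j=k]$ over $j=1,\dots,m-1$ and appending the $j=m$ boundary term $m/m^m$; your balls-into-bins framing, conditioning on the minimum occupied slot, and handling of the $0^0$ convention are the same argument spelled out more carefully.
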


\subsection{Recoverability}
\label{sub:recoverablity}
Next, we theoretically argue that \dante\ recovers and guarantees update of the entire 
network in the presence of both internal and external adversary under specific
assumptions~(Proofs in Appendix~\ref{sec:proofs}).
For an heterogeneous network $G=\langle V,E\rangle$ of devices, 
we define the graph $G_b \subseteq G$ for application $b$ as:
\begin{definition}
For any given application $b$, let $V'_{b} \subseteq V$ be the 
subset of devices that either runs or stores the application $b$. 
Let $H_b$ be the set of HR devices in $G$ that are connected 
to at least one device in $V'_b$ either directly or through 
a sequence of HR devices. Let $V_{b}=V'_{b}\cup H_{b}$. 
Then $G_b$ is the induced subgraph of $G$ due the vertex set $V_b$. 
\label{def:app graph}
\end{definition}

\begin{theorem}
If $G_b$ is connected and no additional device gets corrupted
after a given time $t_0$ and there exits at least one honest 
device running or storing application $b$ at time $t_0$, 
then \dante\ corrects all devices in $G_b$.
\label{thm:recover}
\end{theorem}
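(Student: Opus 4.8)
The plan is to prove Theorem~\ref{thm:recover} by a frontier-based induction over the connected graph $G_b$, seeded by the honest device guaranteed to exist at time $t_0$ and driven by two liveness properties of the protocol. Fix an honest device $n^\star \in V_b$ that holds a correct copy of application $b$ together with the operator's certificate at time $t_0$, and let $S(t)$ denote the set of devices in $G_b$ that hold correct certified code at time $t \ge t_0$. The assumption that no device is corrupted after $t_0$ is what makes the argument monotone: once a device is corrected it stays correct, so $S(t)$ is non-decreasing, and $n^\star \in S(t_0)$ ensures $S(t_0) \neq \emptyset$. The theorem then reduces to showing that $S(t)$ reaches all of $V_b$, almost surely and in finite expected time.

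First I would establish \emph{self-check liveness}: every corrupt device becomes \blank\ in finite expected time. Since consecutive self-checks are separated by an exponential wait whose rate is confined to $[\lambda_{\min},\lambda_{\max}]$ with $\lambda_{\min} > 0$, the time to the next self-check is stochastically dominated by an exponential variable of mean $1/\lambda_{\min}$ and is therefore finite in expectation. When a corrupt device executes \selfcheck, the property of \attest\ guarantees \whp\ that the recomputed digest $v_i'$ differs from the stored $v_i$ whenever the \code\ region deviates from its certified contents, so the corruption is detected and \rectify\ disables execution, turning the device \blank. Second I would establish \emph{correction liveness}: a \blank\ device that has at least one neighbor in $S(t)$ is corrected in finite expected time. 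Its correct-code neighbor replies after a finite random back-off, and the security layer guarantees the reply is usable: each chunk is checked against the operator's certificate and the stream-signature hash chain, so a genuine chunk is installed while a bogus chunk is rejected immediately rather than after the whole transfer; if not all requested chunks arrive within the timeout $\Delta|N_i|\theta + |N_i|\theta$ the device rebroadcasts its request. Correction therefore proceeds as a sequence of independent attempts in each of which the honest neighbor succeeds with probability bounded below by a constant $\rho > 0$ fixed by the back-off distribution, so the number of attempts is geometrically bounded and the expected correction time is finite.

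To combine these, I would argue that $S(t)$ strictly grows until it exhausts $V_b$. At any time with $S(t) \neq V_b$, connectivity of $G_b$ yields an edge from $S(t)$ to some corrupt device $n$; by self-check liveness $n$ becomes \blank, and since it now has a neighbor in $S(t)$, correction liveness corrects it in finite expected time, enlarging $S$ by at least one vertex. Because $|V_b|$ is finite, at most $|V_b|$ such enlargements occur, so all devices of $G_b$ are corrected in finite expected time. The only structural subtlety is the HR relays in $H_b$: a \blank\ HR device continues to forward through the secure handler in \rom\ (ref.~\S\ref{sub:hardware}), so the communication connectivity underlying the frontier step is preserved even while relays are transiently \blank.

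The hard part will be making correction liveness robust against neighbors that are still corrupt at the moment a device goes \blank, since such neighbors may drop the request, win the acknowledgement race with a small back-off and then stall, or flood the device with garbage chunks. The crux is to show that each such behavior costs only bounded progress per attempt: a malicious neighbor can produce a validly first-signed chunk only if it actually holds the correct code, in which case its response helps rather than harms, and otherwise its chunk fails verification, draws no acknowledgement, and the attempt merely times out and is retried. Coupled with self-check liveness, which shrinks the adversarial set over time because every corrupt neighbor itself eventually becomes \blank, this bounds per-attempt interference and keeps the per-attempt success probability $\rho$ bounded away from zero, which is exactly the ingredient the geometric argument needs. I would also need to absorb the negligible attestation-failure probability into an \whp\ statement so that the finitely many detection events all succeed simultaneously.
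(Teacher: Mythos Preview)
Your proposal is correct and follows essentially the same approach as the paper: both arguments use connectivity of $G_b$, the monotonicity of the honest set after $t_0$, and the eventual self-check plus neighbor-assisted correction to show that the honest frontier absorbs one new device at a time until it exhausts the finite vertex set. The only cosmetic difference is framing---the paper tracks the hop distance from a fixed corrupt device to the nearest honest device and shows it decreases, whereas you track the honest set $S(t)$ and show it grows---but these are dual views of the same induction, and your treatment of self-check liveness, correction liveness, and adversarial interference is considerably more careful than the paper's terse sketch.
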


\begin{theorem}
If $G_b$ is connected and if the update patching the vulnerability
is successfully initiated by \opt\ in at least one device in $G_b$,
then \dante\ guarantees update of the entire network in the presence of
both internal and external adversary. 
\label{thm:update}
\end{theorem}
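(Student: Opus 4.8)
The plan is to treat \ref{thm:update} as a monotone frontier-growth argument layered on top of the correction guarantee of Theorem~\ref{thm:recover}. Fix the application $b$ and let $v_2>v_1$ be the patched version that \opt installs in the originator device. Since $V_b$ is the finite vertex set of $G_b$, it suffices to track the set $U(t)\subseteq V_b$ of devices that have at some point held a version $\ge v_2$; by hypothesis $U(t_0)\neq\emptyset$. The first thing I would establish is a \emph{monotonicity invariant}. Because the adversary lacks $sk_\opti$, it cannot forge a certificate for any version above $v_2$, so $v_2$ stays the highest valid version in $G_b$ for all time. Moreover, whenever a device recovers as a \blank\ device it fetches code through the version-preferential back-off of Equation~\ref{eq:back-off}, whose $z_j\ge z_i$, highest-version-first semantics always select the neighbor advertising the largest available version. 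Consequently, once a device enters $U$ it can only ever re-install a version $\ge v_2$, even under re-corruption, so $U(t)$ is non-decreasing in $t$.

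The core is a \emph{frontier-advancement lemma}. Suppose $U\neq V_b$ at some time. Since $G_b$ is connected, there is a device $w\in V_b\setminus U$ adjacent in $G_b$ to some $u\in U$. I would show $w$ joins $U$ in finite expected time by a case split. If $w$ is honest but still running the obsolete version $v_1$, its own self-check cannot flag it, since its attestation digest was computed over the legitimate $v_1$ code; hence $w$ must be \emph{pushed} the update. This is exactly what the modification to \dante in~\S\ref{sec:code update} provides: the updated neighbor $u$ acts as a new originator and forwards $v_2$ together with the operator's certificate to precisely such honest-but-stale neighbors, after which $w$ validates and installs it. If instead $w$ is corrupt, then because its self-check rate is bounded below by $\lambda_\min>0$, it performs a self-check after expected time at most $1/\lambda_\min$; the attestation mismatch forces it into the \blank\ state and it issues a code-request broadcast. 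Then, exactly as in Theorem~\ref{thm:recover}, its honest updated neighbor $u$ answers, and the back-off rule delivers a version $\ge v_2$. In either case $w$ enters $U$.

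Putting the pieces together, each frontier advance strictly increases $|U|$ and takes finite expected time, so after at most $|V_b|$ advances we have $U=V_b$: every device in $G_b$ holds the patched version. Crucially the argument is indifferent to the propagation model, so it covers both the internal and the external adversary simultaneously: the only thing either adversary can do at the application layer is drop packets and run modified code, both of which merely force devices into the \blank\ or corrupt states that the self-check-plus-recovery loop resolves in bounded expected time. I would invoke Theorem~\ref{thm:recover} essentially verbatim for the recovery substep, so that the genuinely new content here is only the monotone-growth bookkeeping over the finite set $V_b$.

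The step I expect to be the main obstacle is reconciling continuous adversarial re-corruption with monotone progress, and two subtleties need care. First, for an honest-but-stale $w$ whose updating neighbor $u$ happens to be \blank\ or corrupt at the instant the push would occur, I must argue the push is effectively retried; this follows because $w$ regains an updated neighbor once that neighbor recovers (a second appeal to Theorem~\ref{thm:recover}), but the accounting must keep the expected delay finite under an arbitrary, adversarially chosen re-corruption schedule. Second, I must confirm that a re-corrupted member of $U$ never recovers to a stale $v_1$, which is precisely where the highest-version-first semantics of Equation~\ref{eq:back-off}, combined with the guaranteed existence of at least one neighbor still holding $\ge v_2$, is indispensable. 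Formalizing ``finite expected time per advance despite adversarial scheduling'' is therefore the delicate part; everything else reduces to connectivity of $G_b$ and a counting argument over $V_b$.
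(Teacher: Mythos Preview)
Your approach is essentially the paper's own: define the set of updated devices, use connectivity of $G_b$ to find a boundary device, argue it joins the updated set via self-check (if corrupt) or via push from an updated neighbor (if honest but stale), and conclude by finiteness of $V_b$. Your write-up is in fact considerably more careful than the paper's terse proof---you explicitly separate the two boundary cases, track monotonicity under re-corruption, and flag the version-regression and adversarial-scheduling subtleties that the paper simply elides---so nothing is missing and no redirection is needed.
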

\section{Simulation}
\label{sec:simulation}
Since the cost of evaluating \dante\ on a large scale network 
consisting of thousands of device would be high, we test \dante\
by simulating it in \omnet\ version ${\sf 5.5.1}$~\cite{OpenSim}. 
We simulate both internal and external malware propagation with
update scheme of~\cite{asokan2018assured}.


%
\vspace{0.5mm}
\noindent{\bf Network Topology.}
We test \dante\ on three different topology with approximately
1024 LR devices each, with all devices running the same application. 
Our first topology is a connected {\em \adhoc} wireless 
network of 1024 devices spread uniformly across an area of 4~km$\times$4~km. Each device has a wireless transmission range 
of 200 meters around it. The intent behind this topology was to 
capture scenarios such as the ad-hoc deployment of sensor network 
that are ubiquitous in Military application, agriculture, forest 
fire monitoring system, etc.~\cite{vasisht2017farmbeats,johnsen2018application,
jalaian2018evaluating}. The remaining two topologies we simulate 
are Binary and Ternary tree. We pick them to capture Industrial IoT, Building management etc~\cite{dhondge2016hola}. 
In all the above topologies, we use the same 20~ms average 
transmission delay between each pair of connected devices, as it is 
the average value in ZigBee sensor networks~\cite{spanogiannopoulos2009simulation}. 
Lastly, during an update, \opt\ connects to a randomly chosen 
device and update it with a newer version.

%
\vspace{0.5mm}
\noindent
{\bf Internal Adversary.}
To evaluate the effect of internal adversary \adv$_\intl$, for each 
the topology we corrupt $f=30\%$ of the randomly chosen devices to begin with. We also vary the malware propagation rate, $\lambda_\intl$ and 
the number of hops in the limited broadcast to inform neighboring 
device about the presence of an adversary in the network. 
For each topology, we consider two different initial configuration 
depending upon the positioning of the corrupt devices. Namely, we
consider configuration $C_0$ and $C_1$. In $C_0$, the initial 
fraction of corrupt devices are distributed uniformly randomly across 
the entire network. In $C_1$, the corrupt devices form a single 
island, i.e., corrupt device form a single connected network. To 
create these initial configurations, we first enumerate all the 
device. For configuration $C_0$, we then pick $f|\nodes|$ unique device 
uniformly randomly. For $C_1$, we first select a device uniformly 
at random and starting at this chosen device; and then we pick up to 
$|\nodes|f$ device by performing breadth-first-search. 
%

%
\vspace{0.5mm}
\noindent
{\bf External Adversary.}
External adversary \adv$_\extl$ corrupts uniformly randomly independent 
of the devices corrupted in the past. Unlike \adv$_\intl$, we 
evaluate the effect of \adv$_\extl$ starting from network with all
honest device. Also, we disconnect, i.e., disallow \adv$_\extl$ from 
corrupting more devices after a specified period. In practice,
one can disconnect \adv$_\extl$ from further corruption by isolating 
it from the internet.
Let $\lambda_\extl$ be the corruption rate of \adv$_\extl$. \adv$_\extl$ corrupts a randomly chosen device 
after intervals drawn from a exponential distribution with parameter 
$\lambda_\extl$. Also, once \adv$_\extl$ is disconnected from the network, no additional device gets corrupt. 

%

\vspace{0.5mm}
\noindent
{\bf Correction and Update}
For all our simulations, we use initial $\lambda=1/100$, i.e., the average 
inter-arrival time between two consecutive self-check is 100 seconds. 
To evaluate the network behavior with an adaptive self-check rate, we run
all our experiments with $\ttl=0,1,4$. Note that, $\ttl=0$ is the
baseline situation where neighboring devices do not increase their 
self-check rate on hearing warning messages from their neighbor. In all
these experiments we keep $\lambda_\max$ and $\lambda_\min$ to be 
$1/100$ and $1/400$, respectively.

\section{Evaluation}
\label{sec:evaluation}
All the results presented in this section corresponds to simulation 
of \dante\ for 1000 seconds. These results are averaged after
10 simulations with distinct randomness seed. 
Unless otherwise stated, updates in the presence of 
\adv$_\intl$ and \adv$_\extl$ are scheduled at 500 and 700 
seconds respectively, from the start of the simulation. 
\adv$_\extl$ is disconnected at time 300 seconds from 
the start of the experiment.

\pgfplotsset{small,label style={font=\fontsize{8}{9}\selectfont},legend style={font=\fontsize{7}{8}\selectfont},height=3.8cm,width=1.2\textwidth}
\subsection{Internal Adversary}
\label{sub:eval-internal}
\begin{figure}[h!]
\begin{centering}
\ \ \ \ \ \ \ \ \ref{internal}
\begin{subfigure}{0.47\linewidth}
    \begin{centering}
    \begin{tikzpicture}
    \begin{axis}[
        ymax=0.33,
        xmax=600,
        ylabel={Fraction of devices},
        xlabel={time (in seconds)},
        mark repeat=10,
        mark phase=10,
        mark size=1.5pt,
        line width=0.5,
        legend columns=6,
        legend entries={U0;,U1;,B0;,B1;,T0;,T1},
        legend to name=internal,
        mark options=solid,
        ]
        \addplot+[black, mark=x]  table [x=time, y=UnDetectMean, col sep=comma] {data/internal-formatted/init30/int-adv-1.0-c-0-30/int-1.0UniRandom0.csv};
        \addplot+[dashed, black, mark=x]  table [x=time, y=UnDetectMean, col sep=comma] {data/internal-formatted/init30/int-adv-1.0-c-0-30/int-1.0UniRandom1.csv};
        \addplot+[red, mark=+] table [x=time, y=UnDetectMean, col sep=comma] {data/internal-formatted/init30/int-adv-1.0-c-0-30/int-1.0BinaryTree0.csv};
        \addplot+[dashed,  red, mark=+] table [x=time, y=UnDetectMean, col sep=comma] {data/internal-formatted/init30/int-adv-1.0-c-0-30/int-1.0BinaryTree1.csv};
        \addplot+[blue, mark=square] table [x=time, y=UnDetectMean, col sep=comma] {data/internal-formatted/init30/int-adv-1.0-c-0-30/int-1.0TernaryTree0.csv};
        \addplot+[dashed, blue, mark=square] table [x=time, y=UnDetectMean, col sep=comma] {data/internal-formatted/init30/int-adv-1.0-c-0-30/int-1.0TernaryTree1.csv};
    \end{axis}
    \end{tikzpicture}
    \caption{Corrupt}
    \label{fig:int-c0-1-corrupt}
    \end{centering}
\end{subfigure}
\hfill
\begin{subfigure}{0.47\linewidth}
    \begin{centering}
    \begin{tikzpicture}
    \begin{axis}[
        ymax=0.33,
        xmax=600,
        xlabel={time (in seconds)},
        grid=minor,
        mark repeat=10,
        mark phase=10,
        mark size=1.5pt,
        line width=0.5,
        mark options=solid,
        ]
        \addplot+[black, mark=x] table [x=time, y=UnCorrectMean, col sep=comma] {data/internal-formatted/init30/int-adv-1.0-c-0-30/int-1.0UniRandom0.csv};
        \addplot+[dashed, black, mark=x] table [x=time, y=UnCorrectMean, col sep=comma] {data/internal-formatted/init30/int-adv-1.0-c-0-30/int-1.0UniRandom1.csv};
        \addplot+[red, mark=+] table [x=time, y=UnCorrectMean, col sep=comma] {data/internal-formatted/init30/int-adv-1.0-c-0-30/int-1.0BinaryTree0.csv};
        \addplot+[dashed, red, mark=+] table [x=time, y=UnCorrectMean, col sep=comma] {data/internal-formatted/init30/int-adv-1.0-c-0-30/int-1.0BinaryTree1.csv};
        \addplot+[blue, mark=square]  table [x=time, y=UnCorrectMean, col sep=comma] {data/internal-formatted/init30/int-adv-1.0-c-0-30/int-1.0TernaryTree0.csv};

        \addplot+[dashed, blue, mark=square]  table [x=time, y=UnCorrectMean, col sep=comma] {data/internal-formatted/init30/int-adv-1.0-c-0-30/int-1.0TernaryTree1.csv};
    \end{axis}
    \end{tikzpicture}
    \caption{Blank}
    \label{fig:int-c0-1-blank}
    \end{centering}
\end{subfigure}
\caption{Fraction of (a)~corrupt and (b)~blank devices in the presence of 
\adv$_\intl$ with $f=0.30$ and $\lambda_\intl=\lambda_\max$ for configuration
$C_0$. Here B0 and B1 refer to Binary Tree topology with $\ttl=0$
and $\ttl=1$ respectively. Similarly, we use U0,U1 and T0,T1 for 
\adhoc\ and Ternary tree topology respectively.}
\end{centering}
\label{fig:internal-30}
\end{figure}
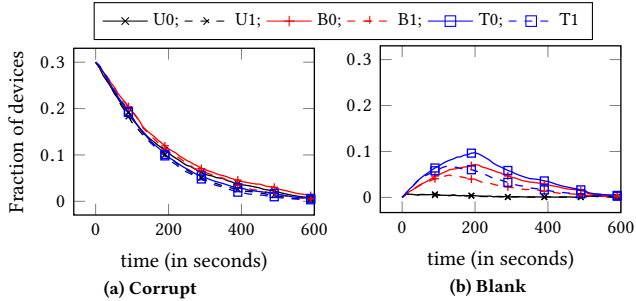
\noindent{\bf Varying Network Topology.}
Figure~\ref{fig:int-c0-1-corrupt} and~\ref{fig:int-c0-1-blank} 
illustrates the fraction of corrupt and blank devices respectively at any 
given time for all three network topologies with configuration $C_0$, 
initial corrupt fraction $f=0.30$, and $\lambda_\intl=\lambda_\max$. 
Notice that, for all three topologies, the 
fraction of corrupt devices starts decreasing almost from the 
start of the simulation. 
This is because the effective malware spread rate in these 
topologies are lower than $\lambda$. In Binary tree topology, 
approximately half of the initial corrupt devices will be leaf
devices. All these devices only have one neighbor, and also often
this neighbor is shared between multiple corrupt devices.
Hence, these leaf devices will repeatedly try to corrupt an already 
corrupt device. A similar situation arises in Ternary tree
topology as well. Further, the rate of reduction of corrupt devices 
closely follows the tail of an exponential distribution. This is 
due to exponential distribution of interarrival between consecutive
self-checks. 
Notice that the fraction of blank node in \adhoc\ topology remains
almost zero for the entire duration, whereas it first increases in
the tree topologies and then decreases. Again, this is because 
approximately half the devices in tree topologies have only one 
neighbor. Thus these nodes cannot self-correct themselves unless 
their neighbor corrects itself. Also, as expected, we observe a 
lower fraction of corrupt and blank devices for $\ttl=1$ as devices
will perform more frequent self-checks and will recover sooner.

\vspace{0.5mm}
\noindent{\bf Adaptive self-check with varying Configuration.}
Figure~\ref{fig:95-correct} represents the time required for 95\% of
the network to become correct, starting with 30\% of devices being
corrupt for varying $\ttl=1,2,4$. 
Devices in $C_0$~(solid lines) correct themselves quickly than devices 
in $C_1$~(dashed lines). This is because in $C_1$, at any given time,
only devices positioned at the edge of the corrupted island can 
correct themselves whereas devices positioned inside the 
corrupt island need to wait for their neighbor devices get 
corrected. Interestingly, non-zero $\ttl$ introduces a larger
drop in correction time in $C_1$. 
This is because corrupt devices at the boundary of the corrupt
island share a considerable fraction of honest neighbors. Hence, these
honest devices perform faster self-checks as they will update 
their self-check period more frequently. We do not see major 
improvements from $\ttl=1$ to $\ttl=4$ due to the local nature 
of malware propagation.  One exception here is the \adhoc\ 
topology in $C_0$. This was expected as each device in \adhoc\
topology has a higher number of neighbor and hence higher $\ttl$
cautions nodes farther apart to update their self-check rate.
\begin{figure}[h!]
\centering
\hfill\ref{95-correct}
\begin{subfigure}{0.47\linewidth}
    \centering
    \begin{tikzpicture}
    \begin{axis}[
        ylabel= {Time (in seconds)},
        xlabel={$\ttl$},
        grid=minor,
        line width=0.5,
        mark options=solid,
        ] 
        \addplot+[black, mark=x] table [x=ttl, y=c95Mean, col sep=comma] {data/internal-time/init30/int-adv-2.0-c-0-30-no-update/int-2.0UniRandom.csv};

        \addplot+[dashed, black, mark=x] table [x=ttl, y=c95Mean, col sep=comma] {data/internal-time/init30/int-adv-2.0-c-1-30-no-update/int-2.0UniRandom.csv};

        \addplot+[red, mark=+] table [x=ttl, y=c95Mean, col sep=comma] {data/internal-time/init30/int-adv-2.0-c-0-30-no-update/int-2.0BinaryTree.csv};

        \addplot+[dashed, red, mark=+] table [x=ttl, y=c95Mean, col sep=comma] {data/internal-time/init30/int-adv-2.0-c-1-30-no-update/int-2.0BinaryTree.csv};

        \addplot+[blue, mark=square] table [x=ttl, y=c95Mean, col sep=comma] {data/internal-time/init30/int-adv-2.0-c-0-30-no-update/int-2.0TernaryTree.csv};

        \addplot+[dashed, blue, mark=square] table [x=ttl, y=c95Mean, col sep=comma] {data/internal-time/init30/int-adv-2.0-c-1-30-no-update/int-2.0TernaryTree.csv};
    \end{axis}
    \end{tikzpicture}
    \caption{$\lambda_\intl=\lambda_\max$}
    \label{fig:95-correct-1}  
\end{subfigure}
\hfill
\begin{subfigure}{0.47\linewidth}
    \begin{tikzpicture}
        \begin{axis}[
        xlabel={$\ttl$},
        grid=minor,
        line width=0.5,
        legend columns=6,
        legend entries={U-$C_0$, U-$C_0$, B-$C_0$, B-$C_1$, T-$C_0$, T-$C_1$},
        legend to name=95-correct,
        mark options=solid,
        ] 
        \addplot+[black, mark=x] table [x=ttl, y=c95Mean, col sep=comma] {data/internal-time/init30/int-adv-1.0-c-0-30-no-update/int-1.0UniRandom.csv};

        \addplot+[dashed, black, mark=x] table [x=ttl, y=c95Mean, col sep=comma] {data/internal-time/init30/int-adv-1.0-c-1-30-no-update/int-1.0UniRandom.csv};

        \addplot+[red, mark=+] table [x=ttl, y=c90Mean, col sep=comma] {data/internal-time/init30/int-adv-1.0-c-0-30-no-update/int-1.0BinaryTree.csv};

        \addplot+[dashed, red, mark=+] table [x=ttl, y=c95Mean, col sep=comma] {data/internal-time/init30/int-adv-1.0-c-1-30-no-update/int-1.0BinaryTree.csv};

        \addplot+[blue, mark=square] table [x=ttl, y=c90Mean, col sep=comma] {data/internal-time/init30/int-adv-1.0-c-0-30-no-update/int-1.0TernaryTree.csv};

        \addplot+[dashed, blue, mark=square] table [x=ttl, y=c95Mean, col sep=comma] {data/internal-time/init30/int-adv-1.0-c-1-30-no-update/int-1.0TernaryTree.csv};

    \end{axis}
    \end{tikzpicture}
    \caption{$\lambda_\intl=2\lambda_\max$}
    \label{fig:95-correct-2}
\end{subfigure}
\caption{Time when 95\% of the devices in the network becomes 
correct starting from a initial fraction of 30\% corrupt devices in 
Binary~(B), \adhoc~(U), and Ternary~(T) topologies for varying $\ttl$. 
Solid and dashed lines corresponds to $C_0$ and $C_1$ respectively.}
\label{fig:95-correct}
\end{figure}
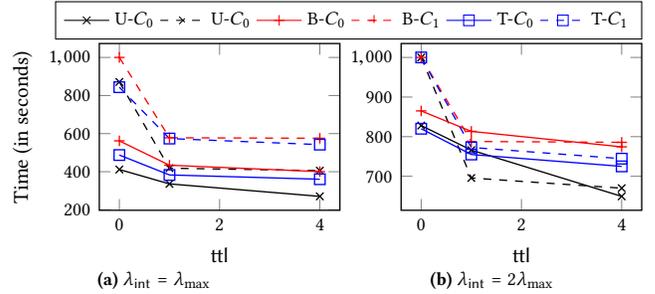

\noindent{\bf Update with different Configuration.}
Figure~\ref{fig:eval-update} illustrates the fraction of updated device 
over time in all topologies for configuration $C_0$ and  $C_1$ with 
$f=0.30$. In all the experiments, eventually, almost all
devices get updated. In both $C_0$ and $C_1$, update in tree 
topologies takes longer because, a single corrupt device can 
temporarily stop updates in its entire subtree. 
Update in tree topology for $C_0$ takes longer time than 
$C_1$, because corrupt devices are more evenly spread across 
the network and hence code update temporarily halts more often 
in $C_0$. Ternary tree topology has a faster update than 
Binary tree due shorter tree height and large average 
number of neighbors.
\begin{figure}[h!]
\begin{centering}
\ \ \ \ \ \ \ \ \ \  \ \ref{update}
\begin{subfigure}{0.47\linewidth}
    \centering
    \begin{tikzpicture}
        \begin{axis}[
        ymin=0.5,
        xmin=480,
        xlabel={time (in seconds)},
        ylabel= {Fraction of devices)},
        grid=minor,
        mark repeat=10,
        mark phase=10,
        mark size=1.5pt,
        line width=0.5,
        legend columns=6,
        legend entries={B0;,B1;,U0;,U1;,T0;,T1},
        legend to name=update,
        mark options=solid,
        ]
        \addplot+[black, mark=x] 
        table [x=time, y=UnUpdateMean, col sep=comma] 
        {data/internal-formatted/init30/int-adv-1.0-c-0-30/int-1.0UniRandom0.csv};
        \addplot+[dashed, black, mark=x]
        table [x=time, y=UnUpdateMean, col sep=comma] 
        {data/internal-formatted/init30/int-adv-1.0-c-0-30/int-1.0UniRandom1.csv};

        \addplot+[red, mark=+] 
        table [x=time, y=UnUpdateMean, col sep=comma] 
        {data/internal-formatted/init30/int-adv-1.0-c-0-30/int-1.0BinaryTree0.csv};
        \addplot+[dashed, red, mark=+] 
        table [x=time, y=UnUpdateMean, col sep=comma] 
        {data/internal-formatted/init30/int-adv-1.0-c-0-30/int-1.0BinaryTree1.csv};

        \addplot+[blue, mark=square] 
        table [x=time, y=UnUpdateMean, col sep=comma] 
        {data/internal-formatted/init30/int-adv-1.0-c-0-30/int-1.0TernaryTree0.csv};
        \addplot+[dashed, blue, mark=square] 
        table [x=time, y=UnUpdateMean, col sep=comma] 
        {data/internal-formatted/init30/int-adv-1.0-c-0-30/int-1.0TernaryTree1.csv};
    \end{axis}
    \end{tikzpicture}
    \caption{Configuration $C_0$}
    \label{fig:update-c0}
\end{subfigure}
\hfill
\begin{subfigure}{0.48\linewidth}
    \centering
    \begin{tikzpicture}
    \begin{axis}[
        ymin=0.5,
        xmin=480,
        xlabel={time (in seconds)},
        grid=minor,
        mark repeat=10,
        mark phase=10,
        mark size=1.5pt,
        line width=0.5,
        mark options=solid,
        ] 
        \addplot+[blue, mark=x,] 
        table [x=time, y=UnUpdateMean, col sep=comma] 
        {data/internal-formatted/init30/int-adv-1.0-c-1-30/int-1.0BinaryTree0.csv};
        \addplot+[dashed, blue, mark=x,]
        table [x=time, y=UnUpdateMean, col sep=comma] 
        {data/internal-formatted/init30/int-adv-1.0-c-1-30/int-1.0BinaryTree1.csv};
        \addplot+[red, mark=square,] 
        table [x=time, y=UnUpdateMean, col sep=comma] 
        {data/internal-formatted/init30/int-adv-1.0-c-1-30/int-1.0UniRandom0.csv};
        \addplot+[dashed, red, mark=square,] 
        table [x=time, y=UnUpdateMean, col sep=comma] 
        {data/internal-formatted/init30/int-adv-1.0-c-1-30/int-1.0UniRandom1.csv};
        \addplot+[black, mark=+,] 
        table [x=time, y=UnUpdateMean, col sep=comma] 
        {data/internal-formatted/init30/int-adv-1.0-c-1-30/int-1.0TernaryTree0.csv};
        \addplot+[dashed, black, mark=+,] 
        table [x=time, y=UnUpdateMean,  col sep=comma] 
        {data/internal-formatted/init30/int-adv-1.0-c-1-30/int-1.0TernaryTree1.csv};
    \end{axis}
    \end{tikzpicture}
    \caption{Configuration. $C_1$}
    \label{fig:update-c1}  
\end{subfigure}
\caption{Fraction of updated devices in the presence of an internal adversary 
with $f=0.30$, malware spread rate $\lambda_\int=\lambda_\max$ for Binary~(B)
tree, Ternary~(T) tree, and \adhoc~(U) network topology. Figure~(a) and (b)
corresponds to $C_0$ and $C_1$ respectively. Solid and dashed lines correspond to 
$\ttl=0$ and $\ttl=1$ respectively.}
\label{fig:eval-update}
\end{centering}
\end{figure}
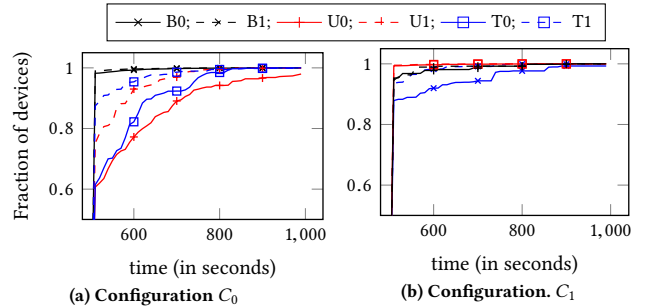

\subsection{External Adversary}
\label{sub:eval-external}
\pgfplotsset{small,label style={font=\fontsize{8}{9}\selectfont},legend style={font=\fontsize{7}{8}\selectfont},height=3.8cm,width=1.2\textwidth}
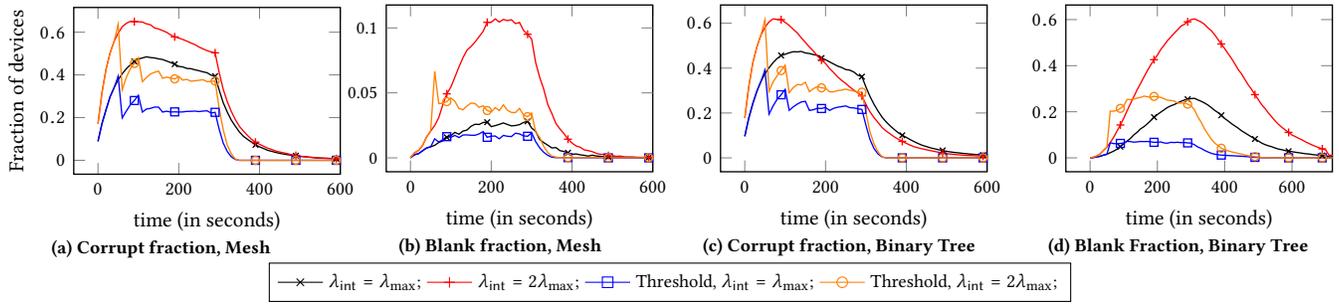
\begin{figure*}[th!]
\centering
\begin{subfigure}{0.24\linewidth}
\centering
    \begin{tikzpicture}
        \begin{axis}[
        xmax=600,
        xlabel={time (in seconds)},
        ylabel= {Fraction of devices},
        grid=minor,
        mark repeat=10,
        mark phase=10,
        mark size=1.5pt,
        line width=0.5,
        legend columns=8,
        legend entries={
        {$\lambda_\intl =\lambda_\max$;}, {$\lambda_\intl=2\lambda_\max$;},{Threshold, $\lambda_\intl =\lambda_\max$;}, {Threshold, $\lambda_\intl=2\lambda_\max$;}},
        legend to name=external,
        mark options=solid,
        ] 
        \addplot+[black, mark=x] table [x=time, y=UnDetectMean, col sep=comma] {data/external/ext-adv-1.0/ext-1.0UniRandom1.csv};

        \addplot+[red, mark=+] table [x=time, y=UnDetectMean, col sep=comma] {data/external/ext-adv-2.0/ext-2.0UniRandom1.csv};

        \addplot+[blue, mark=square] table [x=time, y=UnDetectMean, col sep=comma] {data/external/ext-adv-1.0/ext-1.0UniRandomTh1.csv};

        \addplot+[orange, mark=o] table [x=time, y=UnDetectMean, col sep=comma] {data/external/ext-adv-2.0/ext-2.0UniRandomTh1.csv};
    \end{axis}
    \end{tikzpicture}
    \caption{Corrupt fraction, \adhoc}
    \label{fig:ext-uni-corrupt}
\end{subfigure}
\hfill
\begin{subfigure}{0.24\linewidth}
    \centering
    \begin{tikzpicture}
    \begin{axis}[
        xmax=600,
        xlabel={time (in seconds)},
        ytick = {\empty},
        extra y ticks={0,0.05,0.1},
        extra y tick labels={0,0.05,0.1},
        grid=minor,
        mark repeat=10,
        mark phase=10,
        mark size=1.5pt,
        line width=0.5,
        mark options=solid,
        ] 
        \addplot+[black, mark=x] table [x=time, y=UnCorrectMean, col sep=comma] {data/external/ext-adv-1.0/ext-1.0UniRandom1.csv};

        \addplot+[red, mark=+] table [x=time, y=UnCorrectMean, col sep=comma] {data/external/ext-adv-2.0/ext-2.0UniRandom1.csv};

        \addplot+[blue, mark=square] table [x=time, y=UnCorrectMean, col sep=comma] {data/external/ext-adv-1.0/ext-1.0UniRandomTh1.csv};

        \addplot+[orange, mark=o] table [x=time, y=UnCorrectMean, col sep=comma] {data/external/ext-adv-2.0/ext-2.0UniRandomTh1.csv};
    \end{axis}
    \end{tikzpicture}
    \caption{Blank fraction, \adhoc}
    \label{fig:ext-uni-blank}  
\end{subfigure}
\hfill
\begin{subfigure}{0.24\linewidth}
\centering
    \begin{tikzpicture}
        \begin{axis}[
        xmax=600,
        xlabel={time (in seconds)},
        grid=minor,
        mark repeat=10,
        mark phase=10,
        mark size=1.5pt,
        line width=0.5,
        legend columns=6,
        mark options=solid,
        ] 

        \addplot+[black, mark=x] table [x=time, y=UnDetectMean, col sep=comma] {data/external/ext-adv-1.0/ext-1.0BinaryTree1.csv};

        \addplot+[red, mark=+] table [x=time, y=UnDetectMean, col sep=comma] {data/external/ext-adv-2.0/ext-2.0BinaryTree1.csv};

        \addplot+[blue, mark=square] table [x=time, y=UnDetectMean, col sep=comma] {data/external/ext-adv-1.0/ext-1.0BinaryTreeTh1.csv};

        \addplot+[orange, mark=o] table [x=time, y=UnDetectMean, col sep=comma] {data/external/ext-adv-2.0/ext-2.0BinaryTreeTh1.csv};
    \end{axis}
    \end{tikzpicture}
    \caption{Corrupt fraction, Binary Tree}
    \label{fig:ext-bin-corrupt}
\end{subfigure}
\hfill
\begin{subfigure}{0.24\linewidth}
    \centering
    \begin{tikzpicture}
    \begin{axis}[
        xmax=720,
        xlabel={time (in seconds)},
        grid=minor,
        mark repeat=10,
        mark phase=10,
        mark size=1.5pt,
        line width=0.5,
        mark options=solid,
        ] 
        \addplot+[black, mark=x] table [x=time, y=UnCorrectMean, col sep=comma] {data/external/ext-adv-1.0/ext-1.0BinaryTree1.csv};

        \addplot+[red, mark=+] table [x=time, y=UnCorrectMean, col sep=comma] {data/external/ext-adv-2.0/ext-2.0BinaryTree1.csv};

        \addplot+[blue, mark=square] table [x=time, y=UnCorrectMean, col sep=comma] {data/external/ext-adv-1.0/ext-1.0BinaryTreeTh1.csv};

        \addplot+[orange, mark=o] table [x=time, y=UnCorrectMean, col sep=comma] {data/external/ext-adv-2.0/ext-2.0BinaryTreeTh1.csv};
    \end{axis}
    \end{tikzpicture}
    \caption{Blank Fraction, Binary Tree}
    \label{fig:ext-bin-blank}  
\end{subfigure}
\ref{external}
\caption{Fraction of corrupt and \blank\ device in \adhoc\ and Binary Tree topology in the presence of an external adversary \adv$_\extl$. Solid lines
corresponds to the adaptive case with $\ttl=1$ and dashed lines 
corresponds to non-adaptive case, i.e., $\ttl=0$. Plots for the situation
where the interarrival time between two consecutive self-checks are }
\label{fig:external}
\end{figure*}

\vspace{0.5mm}
\noindent{\bf Varying network topology.}
Red and black plots in Figure~\ref{fig:ext-uni-corrupt} 
and~\ref{fig:ext-bin-corrupt} illustrates the fraction 
of corrupt devices for \adhoc\ and Binary tree topology 
with $\lambda_\extl=\lambda_\max, 2\lambda_\max$ respectively. 
We omit the results for the Ternary tree as it is very similar to the 
results of Binary tree topology. 
For both topologies, the fraction of undetected corrupt device increases 
approximately until 100 seconds and then gradually starts decreasing. This 
is because we initialize $\lambda =\lambda_\max=1/100$.
Interestingly, after 100 seconds, although the adversary is corrupting 
additional devices, the fraction of corrupt nodes decreases even for
$\lambda_\extl=2\lambda_\max$. 
This is because adversary randomly chooses device for corruption and 
since more than $50\%$ of the devices are already corrupt or blank by 
time 100s, the effective corruption rate is lower than $\lambda_\max$. 

The fraction of blank devices is higher in Binary tree
topology because blank devices in Binary tree have fewer 
neighbors, and hence they remain blank till one of their neighbor corrects 
itself. Alternatively, in \adhoc\ topology average number of neighbors
per device is higher, which increases the likelihood of one honest
neighbor at any given time. 
This also explains the rapid decrease in fraction of  corrupt device 
in Binary tree topology. As blank devices are immune to
corruption, the probability of corrupting an honest device is lower 
in Binary tree topology. Once \adv$_\extl$ is disconnected 
from the network, i.e., after 300s, the rate of reduction in 
fraction of corrupt device follows the tail of exponential
distribution.

\vspace{0.5mm}
\noindent{\bf Varying Spread Rate.}
Black plot in Figure~\ref{fig:external} corresponds to
$\lambda_\extl=\lambda_\max$ and red plot corresponds to
$\lambda_\extl=2\lambda_\max$. As expected
with higher $\lambda_\extl$, a higher fraction of devices gets 
corrupt. Also, for the same reason, the fraction of the correct device
is lower for the lower corruption rate. Interestingly, the fraction 
of corrupt devices is higher in  Binary tree topology after 200s for
$\lambda_\extl = 2\lambda_\max$. As discussed earlier, this is
due to the higher fraction of blank devices in the network which 
reduces the effective malware spread rate.

\vspace{0.5mm}
\noindent{\bf Thresholding self-check rate.}
So far we have only considered the scheme where the interarrival
time between consecutive self-checks at honest devices are 
drawn from an exponential distribution with parameter $\lambda$.
An issue with this approach is the unbounded interarrival time. 
Thus, we evaluate \dante\ with the modification 
where we upper bound the self-check period by 50 seconds, i.e.,
$\delta \gets \min\{\exp\{\lambda\},50\}$. 
Blue and orange plots in Figure~\ref{fig:external} illustrates 
this results in the presence of \adv$_\extl$. Observe that,
thresholding reduces the fraction of corrupt and blank devices.
This is because, devices are performing frequent self-checks, 
hence detecting the malware earlier. But this comes at the cost
of higher energy usage. 
Also, for both \adhoc\ and binary tree topologies, at time 
instants that are multiples of 50, a large number of devices 
detects and hence corrects themselves. Interestingly, the fraction 
of corrupt device does not reach zero at time instants because 
additional devices whose first self-check interval was less 
than 50 gets corrupted.

\subsection{Performance}
\label{sub:eval performance}
As we describe in~\S\ref{sub:memory cost}, our bloom filter
based approach requires $\ell|L|+\mu Z/t$ bits of \sram\ 
space in contrast to $\ell Z/t$ bits of space using the 
naive approach. 
Therefore, for $Z=16384$ Bytes, i.e., 16 KB which is typically the case 
with Texas Instrument's MSP430 micro-controllers~\cite{msp2018dang}, 
for $t=256$ Bytes, $|L|=4$ and $\mu=8$ we will only require 128 
Bytes of additional space in \sram. This gives us an $8\times$
improvement over the naive system with $\ell=128$. Moreover, by 
substituting these numbers in equation~(\ref{eq:expected chunk}),
we get that the expected number of chunks, a blank device needs
to download for $\kappa=4$, i.e., when adversary modifies content 
of four chunks, is $\approx10$. This is $6\times$ better than the
naive scheme of downloading all the chunks. 
Lastly, using equation~(\ref{eq:expected number}), we get that
the expected number of the honest neighbors who will transmit 
the requested chunks for different values of $m$, the number of 
honest neighbor in the  worst case are:
\begin{table}[h!]
  \begin{tabular}{cccccc} 
      \hline
      \# neighbors, m & 2 & 5 & 10 & 20 \\ 
      \hline
      E[\# neighbors to transmit code] & 1.50 & 1.57 & 1.57 & 1.58 \\ 
      \hline
  \end{tabular}
  \label{tab:permmission}
\end{table}

\noindent
This shows even in dense network, in expectation, less than two
neighbors will end up transmitting the requested chunks. This 
is significantly better than all the naive approaches.
\section{Related Work}
\label{sec:related work}

\dante falls into the genre of Device Swarm Security.
While our work focuses on correction and  updation in the presence 
of an adversary, most of the previous works looked only at only
attestation in the presence of an adversary or updation with no adversary.


Proposals such as~\cite{asokan2015seda, carpent2017lightweight, ambrosin2016sana} assume a {\em Single External Verifier}
to carry out swarm attestation while others~\cite{ibrahim2018aid, wedaj2019dads} use a \textit{Decentralized} approach where each
member device is attested by a genuine node in its neighborhood.
Ambrosin M. et al. designed a collective attestation scheme for IoT
swarms for Highly Dynamic Swarm Topologies~\cite{ambrosin2018pads}.
These methods do not address updating or correction of code,  however.

In SAFE$^{d}$~\cite{visintin2019safe}, a pair of embedded devices
in a swarm attest to each other without the need of an external 
verifier. Similar to~\cite{wedaj2019dads}, SAFE$^{d}$ also
removes a single-point-of-failure issue by allowing swarm 
members to coordinate and self-protect the underlying network. 
The SAFE$^{d}$ network forms multiple overlays among swarm 
members that replicate proofs indicating the correctness of 
prover devices.  
Recently, Ibrahim et al. proposed HEALED~\cite{ibrahim2019healed},
a new attestation scheme capable of detecting corrupt device
and healing upon compromise. Every corrupt device in HEALED,
interact with honest device to localize modified memory regions.
This approach requires $O(\log z)$ rounds of communication to 
identify a single corrupt region for a application program of 
size $z$. Also, none of the above approaches consider propagating
malware. Furthermore, they make strong assumptions such as: 
a corrupt device voluntarily tries to correct itself; every 
device in the network can perform securely route messages
to other honest devices despite the presence of a Byzantine 
adversary in the network. 

Regarding code updation, N. Asokan et al. extended The Update
Framework~\cite{samuel2010survivable} and proposed an
architecture for secure firmware update~\cite{asokan2018assured}.
This work takes various stakeholders such as manufacturer, 
software distributor, domain controller and end devices in 
IoT firmware update ecosystem; and establishes an end-to-end 
security between devices manufactures and IoT devices. 
This work also suffers from the limitations described 
in~\S\ref{sec:code update}.

\section{Conclusion}
\label{sec:conclusion}
In this paper, we presented \dante\ - a novel decentralized, scalable, efficient, and secure mechanism of recovering a network of heterogeneous low-end devices in the presence of self-propagating malware. 
Furthermore, unlike prior works, \dante\ guarantees update 
of entire network. For efficiency, we used bloom-filters to identify compromised code chunks, random back-off and stream signatures to 
reduce bandwidth overhead and enhance security. Evaluation, of 
our approach using OMNeT++, illustrates that \dante\ scales upto 1000s
of device and can recover the entire network in minutes. 
We also evaluated the memory and communication costs of \dante\ and 
showed that it incurs very low overhead.
Addressing these issues with dynamic swarms and run-time 
attacks could be an interesting avenue for future researches.




\begin{acks}
The authors would like to thank Nitin Awathare, Aashish Kolluri, 
Jong Chan Lee, Archit Patke, Soundarya Ramesh, Ling Ren, and Qi 
Wang for helpful discussion and feedback on the early version 
of the paper. 
\end{acks}

\bibliographystyle{ACM-Reference-Format}
\bibliography{main}


\begin{thebibliography}{00}


\ifx \showCODEN    \undefined \def \showCODEN     #1{\unskip}     \fi
\ifx \showDOI      \undefined \def \showDOI       #1{#1}\fi
\ifx \showISBNx    \undefined \def \showISBNx     #1{\unskip}     \fi
\ifx \showISBNxiii \undefined \def \showISBNxiii  #1{\unskip}     \fi
\ifx \showISSN     \undefined \def \showISSN      #1{\unskip}     \fi
\ifx \showLCCN     \undefined \def \showLCCN      #1{\unskip}     \fi
\ifx \shownote     \undefined \def \shownote      #1{#1}          \fi
\ifx \showarticletitle \undefined \def \showarticletitle #1{#1}   \fi
\ifx \showURL      \undefined \def \showURL       {\relax}        \fi
\providecommand\bibfield[2]{#2}
\providecommand\bibinfo[2]{#2}
\providecommand\natexlab[1]{#1}
\providecommand\showeprint[2][]{arXiv:#2}

\bibitem[\protect\citeauthoryear{??}{tru}{2018}]%
        {trustile2018impl}
 \bibinfo{year}{2018}\natexlab{}.
\newblock \showarticletitle{TrustLite, A Platform Security Framework for Tiny
  Embedded Devices}.
\newblock  (\bibinfo{year}{2018}).
\newblock
\showURL{%
\url{https://www.informatik.tu-darmstadt.de/systemsecurity/research_sys/projects_sys/previous_projects/trust_in_mobile_and_embedded_systems/trustlite/trustlite_1.en.jsp}}


\bibitem[\protect\citeauthoryear{??}{Gar}{2019}]%
        {Gartner}
 \bibinfo{year}{2019}\natexlab{}.
\newblock \bibinfo{title}{Gartner Says a Thirty-Fold Increase in
  Internet-Connected Physical Devices by 2020 Will Significantly Alter How the
  Supply Chain Operates}.
\newblock   (\bibinfo{year}{2019}).
\newblock
\showURL{%
\url{http://www.gartner.com/newsroom/id/2688717}}


\bibitem[\protect\citeauthoryear{??}{Air}{2020}]%
        {Airmed}
 \bibinfo{year}{2020}\natexlab{}.
\newblock \bibinfo{title}{Gods and Goddesses of Healing}.
\newblock   (\bibinfo{year}{2020}).
\newblock
\showURL{%
\url{https://www.learnreligions.com/gods-and-goddesses-of-healing-2561980}}


\bibitem[\protect\citeauthoryear{??}{Ope}{2020}]%
        {OpenSim}
 \bibinfo{year}{2020}\natexlab{}.
\newblock \showarticletitle{OpenSim Ltd. OMNeT++ discrete event simulator.}
\newblock  (\bibinfo{year}{2020}).
\newblock
\showURL{%
\url{http://omnetpp.org/.}}


\bibitem[\protect\citeauthoryear{Ambrosin, Conti, Ibrahim, Neven, Sadeghi, and
  Schunter}{Ambrosin et~al\mbox{.}}{2016}]%
        {ambrosin2016sana}
\bibfield{author}{\bibinfo{person}{Moreno Ambrosin}, \bibinfo{person}{Mauro
  Conti}, \bibinfo{person}{Ahmad Ibrahim}, \bibinfo{person}{Gregory Neven},
  \bibinfo{person}{Ahmad-Reza Sadeghi}, {and} \bibinfo{person}{Matthias
  Schunter}.} \bibinfo{year}{2016}\natexlab{}.
\newblock \showarticletitle{SANA: secure and scalable aggregate network
  attestation}. In \bibinfo{booktitle}{{\em Proceedings of the 2016 ACM SIGSAC
  Conference on Computer and Communications Security}}. ACM,
  \bibinfo{pages}{731--742}.
\newblock


\bibitem[\protect\citeauthoryear{Ambrosin, Conti, Lazzeretti, Rabbani, and
  Ranise}{Ambrosin et~al\mbox{.}}{2018}]%
        {ambrosin2018pads}
\bibfield{author}{\bibinfo{person}{Moreno Ambrosin}, \bibinfo{person}{Mauro
  Conti}, \bibinfo{person}{Riccardo Lazzeretti}, \bibinfo{person}{Md~Masoom
  Rabbani}, {and} \bibinfo{person}{Silvio Ranise}.}
  \bibinfo{year}{2018}\natexlab{}.
\newblock \showarticletitle{PADS: Practical Attestation for Highly Dynamic
  Swarm Topologies}.
\newblock \bibinfo{journal}{{\em arXiv preprint arXiv:1806.05766\/}}
  (\bibinfo{year}{2018}).
\newblock


\bibitem[\protect\citeauthoryear{Antonakakis, April, Bailey, Bernhard,
  Bursztein, Cochran, Durumeric, Halderman, Invernizzi, Kallitsis,
  et~al\mbox{.}}{Antonakakis et~al\mbox{.}}{2017}]%
        {antonakakis2017understanding}
\bibfield{author}{\bibinfo{person}{Manos Antonakakis}, \bibinfo{person}{Tim
  April}, \bibinfo{person}{Michael Bailey}, \bibinfo{person}{Matt Bernhard},
  \bibinfo{person}{Elie Bursztein}, \bibinfo{person}{Jaime Cochran},
  \bibinfo{person}{Zakir Durumeric}, \bibinfo{person}{J~Alex Halderman},
  \bibinfo{person}{Luca Invernizzi}, \bibinfo{person}{Michalis Kallitsis},
  {et~al\mbox{.}}} \bibinfo{year}{2017}\natexlab{}.
\newblock \showarticletitle{Understanding the mirai botnet}. In
  \bibinfo{booktitle}{{\em 26th $\{$USENIX$\}$ Security Symposium
  ($\{$USENIX$\}$ Security 17)}}. \bibinfo{pages}{1093--1110}.
\newblock


\bibitem[\protect\citeauthoryear{Asokan, Brasser, Ibrahim, Sadeghi, Schunter,
  Tsudik, and Wachsmann}{Asokan et~al\mbox{.}}{2015}]%
        {asokan2015seda}
\bibfield{author}{\bibinfo{person}{N Asokan}, \bibinfo{person}{Ferdinand
  Brasser}, \bibinfo{person}{Ahmad Ibrahim}, \bibinfo{person}{Ahmad-Reza
  Sadeghi}, \bibinfo{person}{Matthias Schunter}, \bibinfo{person}{Gene Tsudik},
  {and} \bibinfo{person}{Christian Wachsmann}.}
  \bibinfo{year}{2015}\natexlab{}.
\newblock \showarticletitle{Seda: Scalable embedded device attestation}. In
  \bibinfo{booktitle}{{\em Proceedings of the 22nd ACM SIGSAC Conference on
  Computer and Communications Security}}. ACM, \bibinfo{pages}{964--975}.
\newblock


\bibitem[\protect\citeauthoryear{Asokan, Nyman, Rattanavipanon, Sadeghi, and
  Tsudik}{Asokan et~al\mbox{.}}{2018}]%
        {asokan2018assured}
\bibfield{author}{\bibinfo{person}{N Asokan}, \bibinfo{person}{Thomas Nyman},
  \bibinfo{person}{Norrathep Rattanavipanon}, \bibinfo{person}{Ahmad-Reza
  Sadeghi}, {and} \bibinfo{person}{Gene Tsudik}.}
  \bibinfo{year}{2018}\natexlab{}.
\newblock \showarticletitle{ASSURED: Architecture for Secure Software Update of
  Realistic Embedded Devices}.
\newblock \bibinfo{journal}{{\em IEEE Transactions on Computer-Aided Design of
  Integrated Circuits and Systems\/}} \bibinfo{volume}{37},
  \bibinfo{number}{11} (\bibinfo{year}{2018}), \bibinfo{pages}{2290--2300}.
\newblock


\bibitem[\protect\citeauthoryear{Azmoodeh, Dehghantanha, and Choo}{Azmoodeh
  et~al\mbox{.}}{2018}]%
        {azmoodeh2018robust}
\bibfield{author}{\bibinfo{person}{Amin Azmoodeh}, \bibinfo{person}{Ali
  Dehghantanha}, {and} \bibinfo{person}{Kim-Kwang~Raymond Choo}.}
  \bibinfo{year}{2018}\natexlab{}.
\newblock \showarticletitle{Robust malware detection for internet of
  (battlefield) things devices using deep eigenspace learning}.
\newblock \bibinfo{journal}{{\em IEEE Transactions on Sustainable Computing\/}}
  (\bibinfo{year}{2018}).
\newblock


\bibitem[\protect\citeauthoryear{Bertino and Islam}{Bertino and Islam}{2017}]%
        {bertino2017botnets}
\bibfield{author}{\bibinfo{person}{Elisa Bertino} {and} \bibinfo{person}{Nayeem
  Islam}.} \bibinfo{year}{2017}\natexlab{}.
\newblock \showarticletitle{Botnets and internet of things security}.
\newblock \bibinfo{journal}{{\em Computer\/}} \bibinfo{number}{2}
  (\bibinfo{year}{2017}), \bibinfo{pages}{76--79}.
\newblock


\bibitem[\protect\citeauthoryear{Bhargavan, Blanchet, and Kobeissi}{Bhargavan
  et~al\mbox{.}}{2017}]%
        {bhargavan2017verified}
\bibfield{author}{\bibinfo{person}{Karthikeyan Bhargavan},
  \bibinfo{person}{Bruno Blanchet}, {and} \bibinfo{person}{Nadim Kobeissi}.}
  \bibinfo{year}{2017}\natexlab{}.
\newblock \showarticletitle{Verified models and reference implementations for
  the TLS 1.3 standard candidate}. In \bibinfo{booktitle}{{\em 2017 IEEE
  Symposium on Security and Privacy (SP)}}. IEEE, \bibinfo{pages}{483--502}.
\newblock


\bibitem[\protect\citeauthoryear{Brasser, El~Mahjoub, Sadeghi, Wachsmann, and
  Koeberl}{Brasser et~al\mbox{.}}{2015}]%
        {brasser2015tytan}
\bibfield{author}{\bibinfo{person}{Ferdinand Brasser}, \bibinfo{person}{Brahim
  El~Mahjoub}, \bibinfo{person}{Ahmad-Reza Sadeghi}, \bibinfo{person}{Christian
  Wachsmann}, {and} \bibinfo{person}{Patrick Koeberl}.}
  \bibinfo{year}{2015}\natexlab{}.
\newblock \showarticletitle{TyTAN: tiny trust anchor for tiny devices}. In
  \bibinfo{booktitle}{{\em Proceedings of the 52nd Annual Design Automation
  Conference}}. ACM, \bibinfo{pages}{34}.
\newblock


\bibitem[\protect\citeauthoryear{Carpent, ElDefrawy, Rattanavipanon, and
  Tsudik}{Carpent et~al\mbox{.}}{}]%
        {carpent2017lightweight}
\bibfield{author}{\bibinfo{person}{Xavier Carpent}, \bibinfo{person}{Karim
  ElDefrawy}, \bibinfo{person}{Norrathep Rattanavipanon}, {and}
  \bibinfo{person}{Gene Tsudik}.}
\newblock \showarticletitle{Lightweight swarm attestation: a tale of two
  LISA-s}. In \bibinfo{booktitle}{{\em Proceedings of the 2017 ACM on Asia
  Conference on Computer and Communications Security}}.
\newblock


\bibitem[\protect\citeauthoryear{Dang, Plant, and Poole}{Dang
  et~al\mbox{.}}{2018}]%
        {msp2018dang}
\bibfield{author}{\bibinfo{person}{Dung Dang}, \bibinfo{person}{Mione Plant},
  {and} \bibinfo{person}{Mehrvash Poole}.} \bibinfo{year}{2018}\natexlab{}.
\newblock \showarticletitle{Wireless connectivity for the Internet of Things
  (IoT) with MSP430™ microcontrollers (MCUs)}.
\newblock  (\bibinfo{year}{2018}).
\newblock
\showURL{%
\url{https://www.ti.com/lit/wp/slay028/slay028.pdf}}


\bibitem[\protect\citeauthoryear{Dhondge, Shorey, and Tew}{Dhondge
  et~al\mbox{.}}{2016}]%
        {dhondge2016hola}
\bibfield{author}{\bibinfo{person}{Kaustubh Dhondge}, \bibinfo{person}{Rajeev
  Shorey}, {and} \bibinfo{person}{Jeffrey Tew}.}
  \bibinfo{year}{2016}\natexlab{}.
\newblock \showarticletitle{Hola: Heuristic and opportunistic link selection
  algorithm for energy efficiency in industrial internet of things (iiot)
  systems}. In \bibinfo{booktitle}{{\em 2016 8th international conference on
  communication systems and networks (COMSNETS)}}. IEEE, \bibinfo{pages}{1--6}.
\newblock


\bibitem[\protect\citeauthoryear{Eldefrawy, Tsudik, Francillon, and
  Perito}{Eldefrawy et~al\mbox{.}}{2012}]%
        {eldefrawy2012smart}
\bibfield{author}{\bibinfo{person}{Karim Eldefrawy}, \bibinfo{person}{Gene
  Tsudik}, \bibinfo{person}{Aur{\'e}lien Francillon}, {and}
  \bibinfo{person}{Daniele Perito}.} \bibinfo{year}{2012}\natexlab{}.
\newblock \showarticletitle{SMART: Secure and Minimal Architecture for
  (Establishing Dynamic) Root of Trust.}. In \bibinfo{booktitle}{{\em NDSS}},
  Vol.~\bibinfo{volume}{12}. \bibinfo{pages}{1--15}.
\newblock


\bibitem[\protect\citeauthoryear{Fernandes, Jung, and Prakash}{Fernandes
  et~al\mbox{.}}{2016}]%
        {fernandes2016security}
\bibfield{author}{\bibinfo{person}{Earlence Fernandes},
  \bibinfo{person}{Jaeyeon Jung}, {and} \bibinfo{person}{Atul Prakash}.}
  \bibinfo{year}{2016}\natexlab{}.
\newblock \showarticletitle{Security analysis of emerging smart home
  applications}. In \bibinfo{booktitle}{{\em 2016 IEEE Symposium on Security
  and Privacy (SP)}}. IEEE, \bibinfo{pages}{636--654}.
\newblock


\bibitem[\protect\citeauthoryear{Gennaro and Rohatgi}{Gennaro and
  Rohatgi}{1997}]%
        {gennaro1997sign}
\bibfield{author}{\bibinfo{person}{Rosario Gennaro} {and}
  \bibinfo{person}{Pankaj Rohatgi}.} \bibinfo{year}{1997}\natexlab{}.
\newblock \showarticletitle{How to sign digital streams}. In
  \bibinfo{booktitle}{{\em Annual International Cryptology Conference}}.
  Springer, \bibinfo{pages}{180--197}.
\newblock


\bibitem[\protect\citeauthoryear{HaddadPajouh, Dehghantanha, Khayami, and
  Choo}{HaddadPajouh et~al\mbox{.}}{2018}]%
        {haddadpajouh2018deep}
\bibfield{author}{\bibinfo{person}{Hamed HaddadPajouh}, \bibinfo{person}{Ali
  Dehghantanha}, \bibinfo{person}{Raouf Khayami}, {and}
  \bibinfo{person}{Kim-Kwang~Raymond Choo}.} \bibinfo{year}{2018}\natexlab{}.
\newblock \showarticletitle{A deep Recurrent Neural Network based approach for
  Internet of Things malware threat hunting}.
\newblock \bibinfo{journal}{{\em Future Generation Computer Systems\/}}
  (\bibinfo{year}{2018}).
\newblock


\bibitem[\protect\citeauthoryear{Hu, Perrig, and Johnson}{Hu
  et~al\mbox{.}}{2005}]%
        {hu2005ariadne}
\bibfield{author}{\bibinfo{person}{Yih-Chun Hu}, \bibinfo{person}{Adrian
  Perrig}, {and} \bibinfo{person}{David~B Johnson}.}
  \bibinfo{year}{2005}\natexlab{}.
\newblock \showarticletitle{Ariadne: A secure on-demand routing protocol for ad
  hoc networks}.
\newblock \bibinfo{journal}{{\em Wireless networks\/}} (\bibinfo{year}{2005}).
\newblock


\bibitem[\protect\citeauthoryear{Ibrahim, Sadeghi, and Tsudik}{Ibrahim
  et~al\mbox{.}}{2018}]%
        {ibrahim2018aid}
\bibfield{author}{\bibinfo{person}{Ahmad Ibrahim}, \bibinfo{person}{Ahmad-Reza
  Sadeghi}, {and} \bibinfo{person}{Gene Tsudik}.}
  \bibinfo{year}{2018}\natexlab{}.
\newblock \showarticletitle{AID: autonomous attestation of IoT devices}. In
  \bibinfo{booktitle}{{\em SRDS}}.
\newblock


\bibitem[\protect\citeauthoryear{Ibrahim, Sadeghi, and Tsudik}{Ibrahim
  et~al\mbox{.}}{2019}]%
        {ibrahim2019healed}
\bibfield{author}{\bibinfo{person}{Ahmad Ibrahim}, \bibinfo{person}{Ahmad-Reza
  Sadeghi}, {and} \bibinfo{person}{Gene Tsudik}.}
  \bibinfo{year}{2019}\natexlab{}.
\newblock \showarticletitle{Healed: Healing \& attestation for low-end embedded
  devices}. In \bibinfo{booktitle}{{\em International Conference on Financial
  Cryptography and Data Security}}. Springer, \bibinfo{pages}{627--645}.
\newblock


\bibitem[\protect\citeauthoryear{Jalaian, Gregory, Suri, Russell, Sadler, and
  Lee}{Jalaian et~al\mbox{.}}{}]%
        {jalaian2018evaluating}
\bibfield{author}{\bibinfo{person}{Brian Jalaian}, \bibinfo{person}{Timothy
  Gregory}, \bibinfo{person}{Niranjan Suri}, \bibinfo{person}{Stephen Russell},
  \bibinfo{person}{Laurel Sadler}, {and} \bibinfo{person}{Michael Lee}.}
\newblock \showarticletitle{Evaluating LoRaWAN-based IoT devices for the
  tactical military environment}. In \bibinfo{booktitle}{{\em 2018 IEEE 4th
  World Forum on Internet of Things (WF-IoT)}}.
\newblock


\bibitem[\protect\citeauthoryear{Johnsen, Zieli{\'n}ski, Wrona, Suri, Fuchs,
  Pradhan, Furtak, Vasilache, Pellegrini, Dyk, et~al\mbox{.}}{Johnsen
  et~al\mbox{.}}{}]%
        {johnsen2018application}
\bibfield{author}{\bibinfo{person}{Frank~T Johnsen}, \bibinfo{person}{Zbigniew
  Zieli{\'n}ski}, \bibinfo{person}{Konrad Wrona}, \bibinfo{person}{Niranjan
  Suri}, \bibinfo{person}{Christoph Fuchs}, \bibinfo{person}{Manas Pradhan},
  \bibinfo{person}{Janusz Furtak}, \bibinfo{person}{Bogdan Vasilache},
  \bibinfo{person}{Vincenzo Pellegrini}, \bibinfo{person}{Micha{\l} Dyk},
  {et~al\mbox{.}}}
\newblock \showarticletitle{Application of IoT in military operations in a
  smart city}. In \bibinfo{booktitle}{{\em 2018 International Conference on
  Military Communications and Information Systems (ICMCIS)}}. IEEE.
\newblock


\bibitem[\protect\citeauthoryear{Kirsch and Mitzenmacher}{Kirsch and
  Mitzenmacher}{2006}]%
        {kirsch2006less}
\bibfield{author}{\bibinfo{person}{Adam Kirsch} {and} \bibinfo{person}{Michael
  Mitzenmacher}.} \bibinfo{year}{2006}\natexlab{}.
\newblock \showarticletitle{Less hashing, same performance: building a better
  bloom filter}. In \bibinfo{booktitle}{{\em European Symposium on
  Algorithms}}. Springer.
\newblock


\bibitem[\protect\citeauthoryear{Koeberl, Schulz, Sadeghi, and
  Varadharajan}{Koeberl et~al\mbox{.}}{2014}]%
        {koeberl2014trustlite}
\bibfield{author}{\bibinfo{person}{Patrick Koeberl}, \bibinfo{person}{Steffen
  Schulz}, \bibinfo{person}{Ahmad-Reza Sadeghi}, {and} \bibinfo{person}{Vijay
  Varadharajan}.} \bibinfo{year}{2014}\natexlab{}.
\newblock \showarticletitle{TrustLite: A security architecture for tiny
  embedded devices}. In \bibinfo{booktitle}{{\em Proceedings of the Ninth
  European Conference on Computer Systems}}. ACM, \bibinfo{pages}{10}.
\newblock


\bibitem[\protect\citeauthoryear{Kohnh{\"a}user and
  Katzenbeisser}{Kohnh{\"a}user and Katzenbeisser}{2016}]%
        {kohnhauser2016secure}
\bibfield{author}{\bibinfo{person}{Florian Kohnh{\"a}user} {and}
  \bibinfo{person}{Stefan Katzenbeisser}.} \bibinfo{year}{2016}\natexlab{}.
\newblock \showarticletitle{Secure code updates for mesh networked commodity
  low-end embedded devices}. In \bibinfo{booktitle}{{\em European Symposium on
  Research in Computer Security}}. Springer, \bibinfo{pages}{320--338}.
\newblock


\bibitem[\protect\citeauthoryear{Kolias, Kambourakis, Stavrou, and Voas}{Kolias
  et~al\mbox{.}}{2017}]%
        {kolias2017ddos}
\bibfield{author}{\bibinfo{person}{Constantinos Kolias},
  \bibinfo{person}{Georgios Kambourakis}, \bibinfo{person}{Angelos Stavrou},
  {and} \bibinfo{person}{Jeffrey Voas}.} \bibinfo{year}{2017}\natexlab{}.
\newblock \showarticletitle{DDoS in the IoT: Mirai and other botnets}.
\newblock \bibinfo{journal}{{\em Computer\/}} \bibinfo{volume}{50},
  \bibinfo{number}{7} (\bibinfo{year}{2017}), \bibinfo{pages}{80--84}.
\newblock


\bibitem[\protect\citeauthoryear{Kumar, Shen, Case, Garg, Alperovich,
  Kuznetsov, Gupta, and Durumeric}{Kumar et~al\mbox{.}}{2019}]%
        {kumar2019all}
\bibfield{author}{\bibinfo{person}{Deepak Kumar}, \bibinfo{person}{Kelly Shen},
  \bibinfo{person}{Benton Case}, \bibinfo{person}{Deepali Garg},
  \bibinfo{person}{Galina Alperovich}, \bibinfo{person}{Dmitry Kuznetsov},
  \bibinfo{person}{Rajarshi Gupta}, {and} \bibinfo{person}{Zakir Durumeric}.}
  \bibinfo{year}{2019}\natexlab{}.
\newblock \showarticletitle{All things considered: an analysis of IoT devices
  on home networks}. In \bibinfo{booktitle}{{\em 28th $\{$USENIX$\}$ Security
  Symposium ($\{$USENIX$\}$ Security 19)}}. \bibinfo{pages}{1169--1185}.
\newblock


\bibitem[\protect\citeauthoryear{Ma, Soriente, and Tsudik}{Ma
  et~al\mbox{.}}{2009}]%
        {ma2009new}
\bibfield{author}{\bibinfo{person}{Di Ma}, \bibinfo{person}{Claudio Soriente},
  {and} \bibinfo{person}{Gene Tsudik}.} \bibinfo{year}{2009}\natexlab{}.
\newblock \showarticletitle{New adversary and new threats: security in
  unattended sensor networks}.
\newblock \bibinfo{journal}{{\em IEEE network\/}} \bibinfo{volume}{23},
  \bibinfo{number}{2} (\bibinfo{year}{2009}), \bibinfo{pages}{43--48}.
\newblock


\bibitem[\protect\citeauthoryear{Mitzenmacher and Upfal}{Mitzenmacher and
  Upfal}{2017}]%
        {mitzenmacher2017probability}
\bibfield{author}{\bibinfo{person}{Michael Mitzenmacher} {and}
  \bibinfo{person}{Eli Upfal}.} \bibinfo{year}{2017}\natexlab{}.
\newblock \bibinfo{booktitle}{{\em Probability and computing: randomization and
  probabilistic techniques in algorithms and data analysis}}.
\newblock \bibinfo{publisher}{Cambridge university press}.
\newblock


\bibitem[\protect\citeauthoryear{Ronen, Shamir, Weingarten, and
  O’Flynn}{Ronen et~al\mbox{.}}{2017}]%
        {ronen2017iot}
\bibfield{author}{\bibinfo{person}{Eyal Ronen}, \bibinfo{person}{Adi Shamir},
  \bibinfo{person}{Achi-Or Weingarten}, {and} \bibinfo{person}{Colin
  O’Flynn}.} \bibinfo{year}{2017}\natexlab{}.
\newblock \showarticletitle{IoT goes nuclear: Creating a ZigBee chain
  reaction}. In \bibinfo{booktitle}{{\em 2017 IEEE Symposium on Security and
  Privacy (SP)}}. IEEE, \bibinfo{pages}{195--212}.
\newblock


\bibitem[\protect\citeauthoryear{Samuel, Mathewson, Cappos, and
  Dingledine}{Samuel et~al\mbox{.}}{2010}]%
        {samuel2010survivable}
\bibfield{author}{\bibinfo{person}{Justin Samuel}, \bibinfo{person}{Nick
  Mathewson}, \bibinfo{person}{Justin Cappos}, {and} \bibinfo{person}{Roger
  Dingledine}.} \bibinfo{year}{2010}\natexlab{}.
\newblock \showarticletitle{Survivable key compromise in software update
  systems}. In \bibinfo{booktitle}{{\em Proceedings of the 17th ACM conference
  on Computer and communications security}}. ACM, \bibinfo{pages}{61--72}.
\newblock


\bibitem[\protect\citeauthoryear{Seshadri, Luk, Perrig, van Doorn, and
  Khosla}{Seshadri et~al\mbox{.}}{2006}]%
        {seshadri2006scuba}
\bibfield{author}{\bibinfo{person}{Arvind Seshadri}, \bibinfo{person}{Mark
  Luk}, \bibinfo{person}{Adrian Perrig}, \bibinfo{person}{Leendert van Doorn},
  {and} \bibinfo{person}{Pradeep Khosla}.} \bibinfo{year}{2006}\natexlab{}.
\newblock \showarticletitle{SCUBA: Secure code update by attestation in sensor
  networks}. In \bibinfo{booktitle}{{\em Proceedings of the 5th ACM workshop on
  Wireless security}}. ACM, \bibinfo{pages}{85--94}.
\newblock


\bibitem[\protect\citeauthoryear{Seshadri, Perrig, Van~Doorn, and
  Khosla}{Seshadri et~al\mbox{.}}{2004}]%
        {seshadri2004swatt}
\bibfield{author}{\bibinfo{person}{Arvind Seshadri}, \bibinfo{person}{Adrian
  Perrig}, \bibinfo{person}{Leendert Van~Doorn}, {and} \bibinfo{person}{Pradeep
  Khosla}.} \bibinfo{year}{2004}\natexlab{}.
\newblock \showarticletitle{Swatt: Software-based attestation for embedded
  devices}. In \bibinfo{booktitle}{{\em null}}. IEEE, \bibinfo{pages}{272}.
\newblock


\bibitem[\protect\citeauthoryear{Soltan, Mittal, and Poor}{Soltan
  et~al\mbox{.}}{2018}]%
        {soltan2018blackiot}
\bibfield{author}{\bibinfo{person}{Saleh Soltan}, \bibinfo{person}{Prateek
  Mittal}, {and} \bibinfo{person}{H~Vincent Poor}.}
  \bibinfo{year}{2018}\natexlab{}.
\newblock \showarticletitle{BlackIoT: IoT Botnet of high wattage devices can
  disrupt the power grid}. In \bibinfo{booktitle}{{\em 27th $\{$USENIX$\}$
  Security Symposium ($\{$USENIX$\}$ Security 18)}}. \bibinfo{pages}{15--32}.
\newblock


\bibitem[\protect\citeauthoryear{Spanogiannopoulos, Vlajic, and
  Stevanovic}{Spanogiannopoulos et~al\mbox{.}}{2009}]%
        {spanogiannopoulos2009simulation}
\bibfield{author}{\bibinfo{person}{George Spanogiannopoulos},
  \bibinfo{person}{Natalija Vlajic}, {and} \bibinfo{person}{Dusan Stevanovic}.}
  \bibinfo{year}{2009}\natexlab{}.
\newblock \showarticletitle{A simulation-based performance analysis of various
  multipath routing techniques in ZigBee sensor networks}. In
  \bibinfo{booktitle}{{\em International Conference on Ad Hoc Networks}}.
  Springer, \bibinfo{pages}{300--315}.
\newblock


\bibitem[\protect\citeauthoryear{Trautman and Ormerod}{Trautman and
  Ormerod}{2017}]%
        {trautman2017industrial}
\bibfield{author}{\bibinfo{person}{Lawrence~J Trautman} {and}
  \bibinfo{person}{Peter~C Ormerod}.} \bibinfo{year}{2017}\natexlab{}.
\newblock \showarticletitle{Industrial cyber vulnerabilities: Lessons from
  Stuxnet and the Internet of Things}.
\newblock \bibinfo{journal}{{\em U. Miami L. Rev.\/}}  \bibinfo{volume}{72}
  (\bibinfo{year}{2017}), \bibinfo{pages}{761}.
\newblock


\bibitem[\protect\citeauthoryear{Vasisht, Kapetanovic, Won, Jin, Chandra,
  Sinha, Kapoor, Sudarshan, and Stratman}{Vasisht et~al\mbox{.}}{2017}]%
        {vasisht2017farmbeats}
\bibfield{author}{\bibinfo{person}{Deepak Vasisht}, \bibinfo{person}{Zerina
  Kapetanovic}, \bibinfo{person}{Jongho Won}, \bibinfo{person}{Xinxin Jin},
  \bibinfo{person}{Ranveer Chandra}, \bibinfo{person}{Sudipta Sinha},
  \bibinfo{person}{Ashish Kapoor}, \bibinfo{person}{Madhusudhan Sudarshan},
  {and} \bibinfo{person}{Sean Stratman}.} \bibinfo{year}{2017}\natexlab{}.
\newblock \showarticletitle{Farmbeats: An iot platform for data-driven
  agriculture}. In \bibinfo{booktitle}{{\em 14th $\{$USENIX$\}$ Symposium on
  Networked Systems Design and Implementation ($\{$NSDI$\}$)}}.
\newblock


\bibitem[\protect\citeauthoryear{Visintin, Toffalini, Conti, and Zhou}{Visintin
  et~al\mbox{.}}{2019}]%
        {visintin2019safe}
\bibfield{author}{\bibinfo{person}{Alessandro Visintin},
  \bibinfo{person}{Flavio Toffalini}, \bibinfo{person}{Mauro Conti}, {and}
  \bibinfo{person}{Jianying Zhou}.} \bibinfo{year}{2019}\natexlab{}.
\newblock \showarticletitle{SAFE\^{} d: Self-Attestation For Networks of
  Heterogeneous Embedded Devices}.
\newblock \bibinfo{journal}{{\em arXiv preprint arXiv:1909.08168\/}}
  (\bibinfo{year}{2019}).
\newblock


\bibitem[\protect\citeauthoryear{Wang, Chakrabarti, Wang, and Faloutsos}{Wang
  et~al\mbox{.}}{}]%
        {wang2003epidemic}
\bibfield{author}{\bibinfo{person}{Yang Wang}, \bibinfo{person}{Deepayan
  Chakrabarti}, \bibinfo{person}{Chenxi Wang}, {and} \bibinfo{person}{Christos
  Faloutsos}.}
\newblock \showarticletitle{Epidemic spreading in real networks: An eigenvalue
  viewpoint}. In \bibinfo{booktitle}{{\em 22nd International Symposium on
  Reliable Distributed Systems, 2003. Proceedings.}} IEEE.
\newblock


\bibitem[\protect\citeauthoryear{Wedaj, Paul, and Ribeiro}{Wedaj
  et~al\mbox{.}}{2019}]%
        {wedaj2019dads}
\bibfield{author}{\bibinfo{person}{Samuel Wedaj}, \bibinfo{person}{Kolin Paul},
  {and} \bibinfo{person}{Vinay~J Ribeiro}.} \bibinfo{year}{2019}\natexlab{}.
\newblock \showarticletitle{DADS: Decentralized attestation for device swarms}.
\newblock \bibinfo{journal}{{\em ACM Transactions on Privacy and Security
  (TOPS)\/}} (\bibinfo{year}{2019}).
\newblock


\bibitem[\protect\citeauthoryear{Yi, Naldurg, and Kravets}{Yi
  et~al\mbox{.}}{2001}]%
        {yi2001security}
\bibfield{author}{\bibinfo{person}{Seung Yi}, \bibinfo{person}{Prasad Naldurg},
  {and} \bibinfo{person}{Robin Kravets}.} \bibinfo{year}{2001}\natexlab{}.
\newblock \showarticletitle{Security-aware ad hoc routing for wireless
  networks}. In \bibinfo{booktitle}{{\em Proceedings of the 2nd ACM
  international symposium on Mobile ad hoc networking \& computing}}. ACM,
  \bibinfo{pages}{299--302}.
\newblock


\bibitem[\protect\citeauthoryear{Zou, Gong, and Towsley}{Zou
  et~al\mbox{.}}{2002}]%
        {zou2002code}
\bibfield{author}{\bibinfo{person}{Cliff~Changchun Zou}, \bibinfo{person}{Weibo
  Gong}, {and} \bibinfo{person}{Don Towsley}.} \bibinfo{year}{2002}\natexlab{}.
\newblock \showarticletitle{Code red worm propagation modeling and analysis}.
  In \bibinfo{booktitle}{{\em Proceedings of the 9th ACM conference on Computer
  and communications security}}. ACM, \bibinfo{pages}{138--147}.
\newblock


\end{thebibliography}

\appendix
\section{Algorithms}
\begin{algorithm}[h!]
  \caption{\selfcheck}\label{algo:selfcheck}
  \begin{algorithmic}[1]
        \State {\bf Input} $ak_i, v_i$, $[\code]$
        \State $v'_i \gets$ \attest$(ak_i,[\code])$
        \If {$v'_i = v_i$}
        \State $\lambda \gets \max\{\lambda/(\lambda+1),\lambda_\min\}$; \
        $\delta \gets \exp\{\lambda\}$
        \State {\bf enable} interrupt; {\bf resume} application.
      \Else
      \State {\bf mark} $[\code]$ as non-executable.
      \State \rectify$()$
      \EndIf
  \end{algorithmic}
\end{algorithm}

\begin{algorithm}[h!]
  \caption{${\sf handleCodeRequest}$ at device $n_j\in N_i$}
  \label{algo:handle-req}
  \begin{algorithmic}[1]
        \State {\bf input} $\msg_\req=\langle \ttl,s_i,|N_i|,z_i,b,\Pi\rangle$
        \State {\bf global} $A_j, \Delta, \theta, \lambda, \lambda_\max$
        \If {$b \in A_j$}
        \State $z_j \gets \ver(b)$
        \If {$z_j \ge z_i$}
        \State $\tau_j \gets  (\Delta -(z_j-z_i))|N_i|\theta + 
        \left\lfloor{\mathcal U}(0,1)|N_i|\right\rfloor\theta $
        \State {\bf set} transmit code timer after $\tau_j$ 
        \EndIf
        \EndIf
        \If {$\ttl>0$}
        \State $\lambda \gets \min \{ 2\lambda, \lambda_\max \}$; 
        \ $\ttl \gets \ttl-1$
        \EndIf
        \If {$\ttl>0$}
        \State {\bf broadcast} a message warning devices in $N_j$
      \EndIf
  \end{algorithmic}
\end{algorithm}

\begin{algorithm}[h!]
    \caption{${\sf handleCodeResponse}$ at device $n_i$}
    \label{algo:handle-resp}
    \begin{algorithmic}[1]
        \State {\bf input} $\Pi$
        \While {true}
            \State new response $\msg_\resp=\langle\{\posi_j,\dati_j\}\rangle$ 
            \While {{\bf next} $\posi \in \{\Pi\ \cap\ \msg_\resp\}$}
                \State $\dati \gets \msg_\resp[\posi]$
                \If {stream signature of $\dati$ is valid} 
                \State {\bf load} $\dati$ to \code; 
                \State $\Pi \gets \Pi\setminus\posi;
                \ \msg_\resp \gets \msg_\resp\setminus\{\posi,\dati\}$
                \Else
                \State {\bf break}
                \EndIf
            \EndWhile
            \If{$\Pi$ is {\bf empty}}
            \State $\lambda\gets\lambda_\max$; {\bf update} interrupt handler
            \State {\bf broadcast} $\msg_{\sf done}$; {\bf restart} $n_i$
            \Else
            \State $\delta \gets \exp\{\lambda\}$
            \State re-broadcast $\msg_\req$ after $\delta$ time interval.
            \EndIf
        \EndWhile 
    \end{algorithmic}
\end{algorithm}

\section{Notation Table}
\begin{table}[h!]
\centering
\begin{tabular}{|c|c|} 
    \hline
    Notation & Description \\ 
    \hline
    \opt & Network operator/Owner \\
    $pk_{\mathcal O}$, $sk_{\mathcal O}$ & public-private key pair of \opt \\
    $N$ & Total number of nodes in the network \\
    $n_i$ & $i^{th}$ device \\
    $pk_i, sk_i$ & public-private key pair of $n_i$ \\
    $B_i$, $C_i$ & Application binaries executed and stored by $n_i$ \\ 
    $N_i$ & Neighbors of $n_i$ after successful rendezvous \\
    $N^{(b)}_i$ & Devices in $N_i$ that runs/stores binary $b$ \\
    $k_i$ & Symmetric key shared by $n_i$ \\
    $q_i$ & Sequence number of $n_i$ \\
    $\lambda$ & Self-check rate \\
    $ak_i$, $v_i$ & Attestation key and value at $n_i$ \\
    $F$, $L=\{l_j\}$ & Bloom filter and the corresponding keys \\
    \adv$_\intl$, \adv$_\extl$ & Internal and External Adversary resp. \\
    $\lambda_\intl$, $\lambda_\extl$ &  Corruption rate of \adv$_\intl$, and \adv$_\extl$ \\
    \hline
\end{tabular}
\label{tab:notation}
\end{table}

\section{Proofs}
\label{sec:proofs}
To prove Theorem~\ref{thm:expected number} we will first
prove Lemma~\ref{lem:k-transmisson}. Let the time interval where 
devices running an identical version of the code, transmits the 
requested chunk be called as an {\em epoch}. Observe that, each 
epoch $m\theta$ long. Let each epoch be divided into $m$ time
intervals called a {\em slot}. Note that, in \dante\ devices
running different versions always sends the requested chunks 
in disjoint epochs. Also, within an epoch, once an honest device
sends the requested chunk, the recipient device broadcasts to 
each of its neighbors to stop them from redundantly sending 
the same chunks~(ref.~\S\ref{sub:correction}). Hence, if only
neighbor sends the requested chunk in the first non-empty 
slot, there would be no-redundancy at all. 
\begin{lemma}
Let there be $m$ neighbors running the same version of application 
$b$ for any given device. 
Also, let $X^j$ for $1\le j\le m-1$ be the random variable 
denoting the number of devices which sends the requested chunks 
in slot $j$ when the remaining $m-X^j$ nodes transmits in a 
slot greater than $j$. Then, 
\begin{equation}
\Pr[X^j=k] = \frac{\binom{m}{k} (m-j)^{(m-k)}}{m^m}
\end{equation}
\label{lem:k-transmisson}
\end{lemma}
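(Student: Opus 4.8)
The plan is to reduce Lemma~\ref{lem:k-transmisson} to an elementary counting argument over a uniform sample space. First I would make explicit the probabilistic model implied by the backoff rule in equation~(\ref{eq:back-off}). Among the $m$ neighbors that run the same version of $b$, the version-dependent term $\max\{\Delta-(z_j-z_i),0\}|N_i|\theta$ is identical, so all of them fall in a common epoch and are distinguished only by the random offset $\lfloor\mathcal{U}(0,1)|N_i|\rfloor$. Since $\mathcal{U}(0,1)$ is uniform on $(0,1)$ and $|N_i|=m$, this offset is uniform over the $m$ slots of the epoch, which I label $1,\dots,m$, and the $m$ devices choose independently. The sample space is therefore the set of $m^m$ equally likely assignments of $m$ devices to $m$ slots.

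Next I would give a clean description of the event $\{X^j=k\}$. By its definition this is the event that exactly $k$ of the devices land in slot $j$ while the other $m-k$ land in slots strictly greater than $j$; note that this automatically forces slots $1,\dots,j-1$ to be empty, so $\{X^j=k\}$ is precisely the event that slot $j$ is the first non-empty slot and that it carries $k$ devices. There are exactly $m-j$ slots greater than $j$.

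The count is then immediate. I would pick the $k$ devices assigned to slot $j$ in $\binom{m}{k}$ ways, and place each of the remaining $m-k$ devices into one of the $m-j$ later slots in $(m-j)^{m-k}$ ways; the emptiness of the earlier slots needs no separate enforcement, since every device has already been placed in slot $j$ or later. Dividing the number of favourable assignments by the total $m^m$ gives
\[
\Pr[X^j=k]=\frac{\binom{m}{k}(m-j)^{m-k}}{m^m},
\]
which is the claimed identity.

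The only point requiring care — more a bookkeeping convention than a genuine obstacle — is the slot indexing: ``a slot greater than $j$'' must enumerate $m-j$ slots for the exponent to come out correctly, which is what fixes the labels as $1,\dots,m$ and also explains why the lemma is stated only for $1\le j\le m-1$. The boundary case $j=m$ (all devices in the last slot) is degenerate and, as one can check, contributes the separate $m/m^m$ term appearing in Theorem~\ref{thm:expected number}; I would flag this to confirm that the counts are mutually consistent, though it lies outside the lemma itself.
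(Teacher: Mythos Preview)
Your proposal is correct and follows essentially the same counting argument as the paper: uniform sample space of $m^m$ slot assignments, choose the $k$ devices for slot $j$ in $\binom{m}{k}$ ways, place the remaining $m-k$ in slots $j+1,\dots,m$ in $(m-j)^{m-k}$ ways. Your write-up is more careful than the paper's in justifying the uniform model from the backoff rule and in pinning down the slot indexing, but the underlying proof is the same.
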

\begin{proof}
We use counting arguments to prove this theorem. There are total 
$m^m$ possibilities of arranging $m$ devices in $m$ slot. Among
these possibilities, the number of ways $X_j=k$ can occur if:~(\romannumeral1) Any subset of $k$ devices transmits in slot 
$j$; there are $\binom{m}{k}$ possible ways of selecting 
these devices. and~(\romannumeral2) the remaining $m-k$ device 
transmits in slots $j+1$ to $m$; there are $(m-j)^{m-k}$ ways 
of doing this. Putting them together gives us the desired 
result.
\end{proof}

\begin{corollary}
The probability that only one neighbor device will transmit the
requested code in the situation described above is:
\begin{equation}
\Pr[\text{One device transmits code}] = \sum_{j=1}^{m-1}\left(1-\frac{j}{m}\right)^{(m-1)}
\end{equation}
\end{corollary}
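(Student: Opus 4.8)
The plan is to obtain the stated probability by specializing Lemma~\ref{lem:k-transmisson} to $k=1$ and summing over the slot index. The governing observation is that, under the protocol, exactly one device ends up transmitting precisely when the \emph{first} non-empty slot contains a single device: as soon as that device sends its chunk, the recipient's stop-broadcast silences all later slots, so they become immaterial to the count (this is the ``no-redundancy'' case flagged in the preamble to the lemma). Accordingly, I would first argue that the event ``one device transmits code'' decomposes as the disjoint union, over $j=1,\ldots,m-1$, of the events ``slot $j$ is the first non-empty slot and it holds exactly one device.''

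Next I would identify each of these events with $\{X^j=1\}$ as defined in Lemma~\ref{lem:k-transmisson}. The definition of $X^j$ already builds in that the remaining $m-X^j$ devices occupy slots strictly greater than $j$, which forces slots $1,\ldots,j-1$ to be empty; hence $X^j=1$ means slot $j$ is the first occupied slot and contains a single device. Since the index of the first non-empty slot is unique, the events $\{X^j=1\}$ for distinct $j$ are mutually exclusive, and additivity of probability gives
\begin{equation}
\Pr[\text{One device transmits code}] = \sum_{j=1}^{m-1} \Pr[X^j=1].
\end{equation}

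Then I would substitute $k=1$ into the lemma to get
\begin{equation}
\Pr[X^j=1] = \frac{\binom{m}{1}(m-j)^{m-1}}{m^m} = \frac{m\,(m-j)^{m-1}}{m^m} = \left(1-\frac{j}{m}\right)^{m-1},
\end{equation}
and summing over $j$ yields the claimed expression. The range $1\le j\le m-1$ is the one inherited naturally from the lemma; the boundary term $j=m$ would contribute $0^{m-1}=0$ and so may be dropped without affecting the sum.

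I expect the only genuine subtlety to lie in the first step: one must be explicit that ``one device transmits'' is controlled by the first non-empty slot (so that the stopping broadcast renders later slots irrelevant) and that the per-$j$ events are honestly disjoint, so no inclusion–exclusion correction is needed. Once that reduction is justified, the rest is a one-line specialization of Lemma~\ref{lem:k-transmisson} and a direct summation.
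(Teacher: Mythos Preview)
Your proposal is correct and is exactly the intended derivation: the paper states the corollary immediately after Lemma~\ref{lem:k-transmisson} without a separate proof, and your argument---partitioning the event by the index $j$ of the first non-empty slot, identifying each piece with $\{X^j=1\}$, and substituting $k=1$ into the lemma---is the direct specialization that the word ``corollary'' signals.
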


\begin{proof} (Theorem~\ref{thm:expected number})
The first term directly follows Lemma~\ref{lem:k-transmisson} and the
second term corresponds to the case where all device transmits the 
requested chunk in the $m^{\rm th}$ slot.
\end{proof}

\begin{proof}(Theorem~\ref{thm:recover})
Let $\mathbf{H}_b$ denote the set of honest devices at any given time 
after $t_0$, by our assumption $|\mathbf{H}_b|>0$. Consider a corrupt or 
blank device $n_d\in G_b$ which is initially $\nu$ hops away from the 
nearest honest device in $n_h \in \mathbf{H}_b$ and $n_a$ be the 
penultimate node on the path from $n_d$ to $n_h$. By definition, 
$n_a$ is corrupt. After an exponentially distributed waiting time, 
$n_a$ performs a self-check and recovers itself one of its honest
neighbors. This is guaranteed to happen since $n_h \in N^(b)_a$. 
Thus $d_i$ now joins $\mathbf{H}_b$. This reduces the distance
between $n_d$ and the nearest honest by at least one unit. Since,
there are only finite number of nodes in the network, this 
distance will eventually become zero. 
\end{proof}

\begin{proof} (Theorem~\ref{thm:update})
Let $U_b\subseteq G_b$ be the set of updated devices at time $t_0$. 
By definition all non-updated devices in the neighbourhood of devices
in $U_b$. Thus, whenever one of these devices say $n_v$ performs
self-check, it will get updated with the newer version and join 
$U_b$. Device $n_v$ then broadcast the newer version to update all
non-updated honest devices in its neighborhood. Hence, the number
of updated device increases by at least one. Since, there are only
finite number of devices, eventually the entire network will get
updated.
\end{proof}

\end{document}